\documentclass{article}

 \usepackage[preprint]{neurips_2026}

\usepackage[utf8]{inputenc} 
\usepackage[T1]{fontenc}    
\usepackage{hyperref}       
\usepackage{url}            
\usepackage{booktabs}       
\usepackage{amsfonts}       
\usepackage{nicefrac}       
\usepackage{microtype}      
\usepackage{xcolor}         
\usepackage{graphicx}
\usepackage{placeins}
\usepackage{amsmath,amssymb,amsfonts,amsthm,mathtools}
\newtheorem{theorem}{Theorem}
\newtheorem{proposition}{Proposition}
\usepackage{graphicx}
\usepackage{subcaption}
\usepackage{multirow}

\usepackage{algorithm}
\usepackage{algorithmic}

\usepackage[capitalize,noabbrev]{cleveref}

\newcommand{\corrauthormark}{\textsuperscript{*}}
\newcommand{\corrauthornote}{
    \begingroup
    \renewcommand{\thefootnote}{*}
    \footnotetext{Corresponding authors: Zhenzhong Wang and Min Jiang,
    zhenzhongwang@xmu.edu.cn, minjiang@xmu.edu.cn.}
    \endgroup
}

\title{Geometry-Aware Anisotropic Boundary Correction for Aerodynamic Simulation}

\author{
    Xin Zhang$^{1}$,
    Yipeng Huang$^{1}$,
    Shu Jiang$^{2}$,
    Zhenzhong Wang$^{1}$\corrauthormark,
    Min Jiang$^{1}$\corrauthormark\\
    $^{1}$School of Informatics, Xiamen University, Xiamen, China\\
    $^{2}$Institute of Artificial Intelligence, Xiamen University, Xiamen, China
}

\begin{document}

\maketitle
\corrauthornote

\begin{abstract}
Aerodynamic simulation is a key component of engineering shape design, where core quantities such as the surface pressure coefficient strongly depend on flow dynamics near solid boundaries. Neural operators provide an efficient alternative to expensive Computational Fluid Dynamics (CFD) solvers. However, conventional methods treat the boundary region isotropically, failing to account for the distinct physical behaviors along the boundaries. In reality, the aerodynamic process exhibits anisotropy: along the tangential direction, flow propagates along the wall; along the normal direction, physical quantities are constrained by the wall. 
To explicitly model the distinct physical behaviors, we propose GeoABC, a geometry-conditioned anisotropic boundary correction framework. GeoABC leverages the boundary geometries to introduce direction-aware boundary correction into the intermediate representations of neural operators, transforming boundary geometry from static input features into a structural prior that modulates physical prediction. On 2D airfoil and 3D car tasks, GeoABC consistently adapts to multiple neural operator backbones, reducing near-boundary relative $L_2$ error by $\sim$38\% on average, narrowing the structural near-wall gap shared by mainstream neural operators, and advancing neural operators toward high-fidelity aerodynamic simulation.

\end{abstract}

\section{Introduction}
\label{sec:intro}

Steady-state aerodynamic simulation plays a central role in engineering design. The design of cars, wings, and other geometries all relies on Computational Fluid Dynamics (CFD) to evaluate key physical quantities such as pressure fields, velocity fields, and moments~\citep{anderson1995cfd,jameson1988aerodynamic,blazek2015cfd}. Classical CFD solvers can provide high-fidelity solutions. However, due to high computational cost, CFD struggles to be integrated into large-scale design optimization, rapid shape screening, and interactive engineering iteration~\citep{ferziger2002cfd, martins2022aerodynamic, green1997method}.

\begin{figure}[t]
	\centering
    \includegraphics[width=1.0\linewidth]{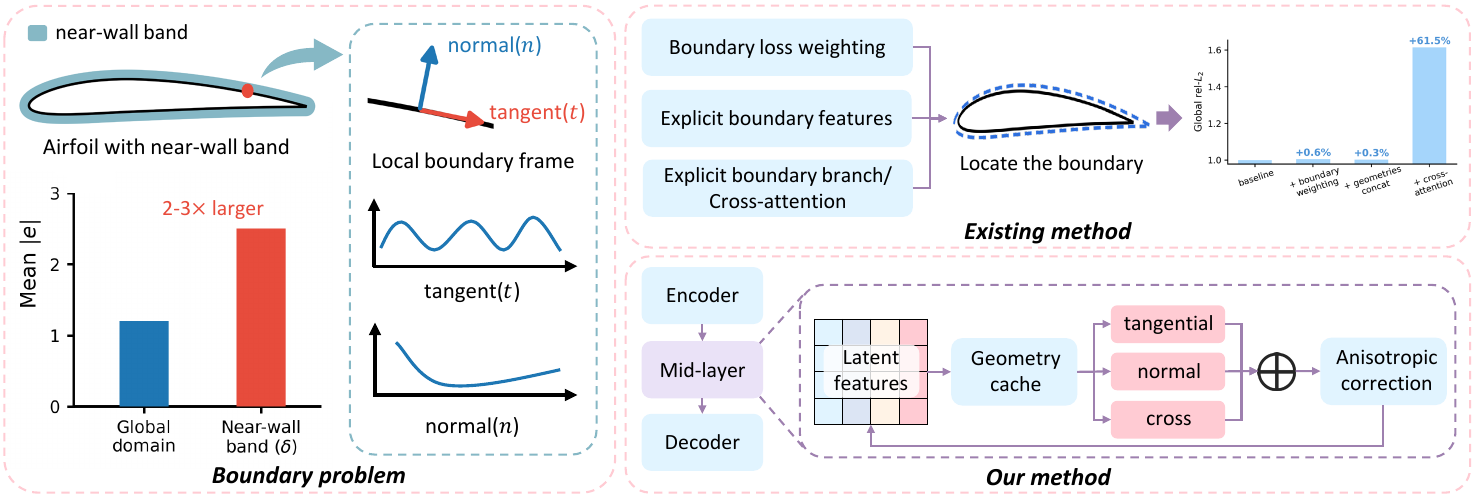}
    \caption{GeoABC addresses the near-wall accuracy gap left by existing boundary-aware remedies by using boundary geometry as a structural prior for direction-aware correction.}
	\label{fig:intro}
\end{figure}

In recent years, a wide range of deep learning methods have been developed as efficient alternatives to traditional numerical solvers for PDE-governed physical simulation~\citep{karniadakis2021pinn,raissi2019pinns,brunton2020machine}. Among these, neural operators learn infinite-dimensional function space mappings from geometry, boundary conditions, or system parameters to the physical solution fields~\citep{lu2021deeponet,li2021fno,kovachki2023nokernel}. Once trained, they can predict pressure fields, velocity fields, and other solution fields through a single forward pass, thereby greatly accelerating steady-state aerodynamic simulation~\citep{catalani2024neural, jiang2025prediction, catalani2025towards}. 

Despite their significant progress in global flow-field prediction, aerodynamic surrogates still face a unique and critical challenge: physical fields in the near-wall region, namely the narrow boundary neighborhood adjacent to solid aerodynamic surfaces, are difficult to predict accurately. On one hand, near-wall regions involve complex mechanisms such as wall constraints, pressure gradients, and local flow separation. These mechanisms lead to strong spatial heterogeneity and local high-frequency variations~\citep{tanarro2020effect}. Compared with relatively smooth flow structures in the far field, near-wall velocity and pressure often change rapidly along the wall-normal direction, while exhibiting different propagation behaviors along the tangential direction. This makes it difficult for neural operators to maintain near-wall accuracy comparable to their global-field accuracy~\citep{vinuesa2017pressure}. On the other hand, near-wall errors do not only affect local pressure or velocity values. They also directly enter the evaluation of aerodynamic performance. Taking an airfoil or a vehicle surface as an example, the wall pressure distribution can be normalized into the surface pressure coefficient, $C_p$. The shape of the $C_p$ curve is closely related to key flow phenomena such as suction peaks, pressure recovery, and potential separation. The same wall pressure distribution also determines surface pressure loads, which further affect the estimation of lift, drag, and moments~\citep{spalart2000strategies, badrya2021unsteady,pomeroy2017design}. Therefore, even though the near-wall region occupies only a small portion of the computational domain, local errors in this region may propagate into biased aerodynamic performance metrics, reducing the reliability of surrogate models in engineering design~\citep{brunton2020machine, vinuesa2022enhancing}.

To alleviate this problem, existing methods typically enhance boundary awareness by weighting boundary losses, explicitly concatenating geometry descriptors such as Signed Distance Function or curvature, or introducing additional boundary branches. These geometric cues are derived from the geometries already available in CFD tasks, such as airfoil profiles, car surfaces, or solid boundary meshes. However, most of these methods still treat the boundary as a special location that should receive more attention, rather than a physical region with an inherent directional structure~\citep{li2023geofno,hao2023gnot,li2023gino}. In contrast, classical CFD often uses body-fitted meshes or wall-normal stretched meshes near-wall regions, increasing resolution along the wall-normal direction to resolve wall constraints, pressure gradients, and near-wall variations~\citep{ferziger2002cfd, thompson1985numerical}. This indicates that physical variations near aerodynamic boundaries are not isotropic but naturally depend on the local tangential-normal coordinate structure. Therefore, simply using geometry descriptors as input features can only inform the model which points are near the boundary, but fail to capture how physical quantities propagate tangentially along the boundary, change rapidly along the wall normal, and how these two directional behaviors interact. As illustrated in Fig.~\ref{fig:intro}, such boundary representations are not well aligned with the geometry-physics anisotropic coupling required for accurate boundary prediction in aerodynamics.

To bridge this gap, we propose GeoABC, a geometry-conditioned anisotropic boundary correction framework. Instead of using geometric descriptors merely as static input features, GeoABC leverages the geometries already available in CFD tasks to shift near-wall prediction from isotropic proximity modeling to anisotropic geometry-physics coupling. Specifically, GeoABC explicitly constructs local tangential-normal structures from these geometries, allowing the model to distinguish boundary-aligned propagation from wall-normal gradient variations. Meanwhile, it performs direction-aware boundary correction in the intermediate representation space of neural operators. This allows geometry to improve the physical predictions of different backbones in a plug-and-play manner, rather than being used only as static features concatenated at the input stage. 
Thus, the contributions of this paper are summarized as follows:
\begin{itemize}
\item we systematically diagnose near-wall errors in steady-state aerodynamic surrogates. We show that these errors cannot be simply attributed to data quality, training strategy, or model capacity, but are closely related to the isotropic representation of boundary physics in existing neural operators.
\item we propose GeoABC, a geometry-conditioned, frame-aware, and plug-and-play anisotropic boundary correction framework that transforms the directional boundary prior implicit in classical CFD into a representation correction mechanism for neural operators.
\item We validate the method on 2D airfoil and 3D car tasks, showing that GeoABC consistently reduces near-wall prediction errors across multiple mainstream backbones and improves engineering metrics that depend directly on the boundary.
\end{itemize}








\section{Related Work}
\label{sec:related}

Neural operators have become an important paradigm for aerodynamic simulation, with the core goal of learning the solution operator underlying PDE-governed systems, i.e., the mapping from input functions, geometric conditions, or physical parameters to solution fields~\citep{kovachki2023neural, diab2025temporal, calvello2025continuum, lin2026fano}. Early representative methods such as DeepONet~\citep{lu2021deeponet} and FNO~\citep{li2021fno} model function-space mappings through branch--trunk decomposition and spectral integral operators, respectively. However, these early formulations are limited in geometry-dependent aerodynamic prediction, as they are not naturally suited to irregular geometries, varying computational domains, or non-uniform samples induced by complex shapes. Later Transformer-based operators, such as GNOT~\citep{hao2023gnot}, Transolver~\citep{wu2024transolver}, and DPOT~\citep{hao2024dpot}, further improve the modeling capacity of neural operators for complex physical fields, irregular samples, and large-scale geometries. Nevertheless, these advances mainly improve global field prediction on complex geometries, while substantial near-wall errors still remain.

To further enhance the awareness of complex geometries and local boundary structures, recent works explicitly incorporate geometry, boundary, or interface information~\citep{alkin2024universal, rahman2024pretraining, herde2024poseidon}. Geo-FNO~\citep{li2023geofno} and GINO~\citep{li2023gino} improve the adaptability to irregular geometries through geometric mappings or geometry encodings. GeoPT~\citep{wu2026geoptscalingphysicssimulation} and From Cheap Geometry to Expensive Physics~\citep{zhang2026cheap} leverage low-cost geometric pre-training to reduce the dependence on expensive physics labels. BENO~\citep{wang2024beno} and IANO~\citep{wang2026cross} further show that boundary or interface information can provide useful structural cues for neural PDE modeling. Overall, these methods demonstrate the importance of geometric information, but they mainly use geometry as a computational-domain representation, input condition, pre-training supervision, or boundary cue, without treating geometric solid boundaries as structural priors that organize near-wall physical variations~\citep{alkin2025ab,zeng2025point}. In this work, we instead use geometry as a structural prior to organize anisotropic near-wall responses, advancing neural operators toward boundary-accurate aerodynamic simulation.

\section{Methodology}
\label{sec:method}

\subsection{Problem Setup}
\label{sec:method:problem}

We study steady-state aerodynamic operator learning on geometry-dependent domains.
For a geometry $G\in\mathcal{G}$, let $\Omega_G\subset\mathbb{R}^d$ denote the fluid domain and $\partial\Omega_G$ its solid boundary.
Given a query point $\mathbf{x}\in\overline{\Omega}_G$ and optional operating conditions $\mathbf{s}$, a neural operator predicts the physical state
\begin{equation}
    \mathcal{F}_{\theta}:(\mathbf{x},G,\mathbf{s})\mapsto \mathbf{u}(\mathbf{x}),
\end{equation}
where $\mathbf{u}(\mathbf{x})$ denotes the target field, such as velocity and pressure.
When the operating conditions are fixed, we omit $\mathbf{s}$ and write $\mathcal{F}_{\theta}:(\mathbf{x},G)\mapsto \mathbf{u}(\mathbf{x})$.

We define the near-wall band as
\begin{equation}
    \Omega_{\delta}(G)
    =
    \{\,\mathbf{x}\in\Omega_G:\operatorname{dist}(\mathbf{x},\partial\Omega_G)<\delta\,\}.
\end{equation}
Instead of using a hard near-wall mask, GeoABC uses a continuous geometry-confidence gate to modulate the strength of boundary correction.
The correction is weakened far from the boundary or where local geometric descriptors are unreliable, and is strengthened in reliable near-wall regions.

\subsection{Overview}
\label{sec:method:overview}

GeoABC is a geometry-conditioned boundary correction module inserted into an intermediate layer of a backbone neural operator.
As shown in Fig.~\ref{fig:geoabc_pipeline}, GeoABC takes two inputs for each query point: the backbone latent representation and the local geometric descriptor from an offline geometry cache.
The geometry cache provides the local tangent--normal frame, curvature, near-wall window, and spatial confidence, while the latent representation provides the physical context to be corrected.


Concretely, the boundary correction proceeds as follows.
First, GeoABC projects the intermediate representation into a low-dimensional anchor space, which serves as a physical interface for directional decomposition, such as velocity direction.
Second, the anchor is decomposed in the local tangent--normal frame induced by the solid boundary.
Third, GeoABC generates tangential, normal, and tangent--normal coupled boundary corrections with geometry-conditioned direction gates.
Finally, the correction is lifted back to the backbone latent space and written into the intermediate representation under a geometry-confidence gate.
The corrected representation is then propagated through the backbone and decoder to produce the final physical prediction.

\begin{figure}[t]
    \centering
    \includegraphics[width=\linewidth]{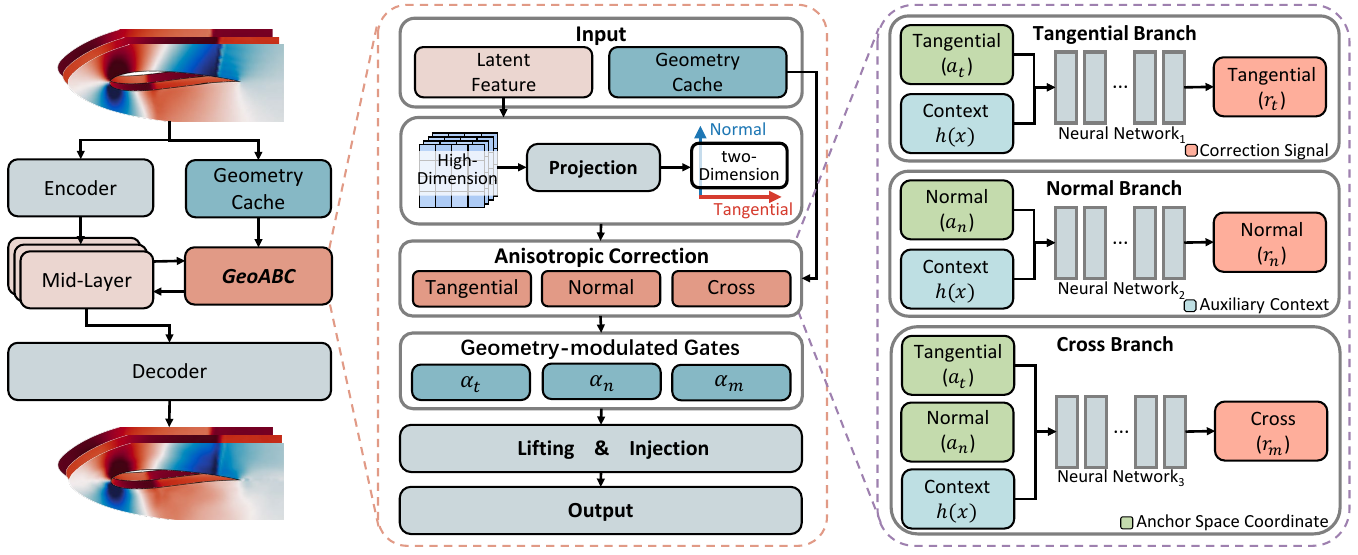}
    \caption{
    Overview of GeoABC.
GeoABC inserts a geometry-conditioned anisotropic correction module into the intermediate layer of a backbone neural operator.
It uses the geometry cache to construct a tangent-normal coordinate frame and projects the backbone latent feature into this local frame.
The module then generates correction signals, modulates them with local geometric conditions, and writes them back to the backbone intermediate representation for near-wall aerodynamic prediction.
    }
    \label{fig:geoabc_pipeline}
\end{figure}

\subsection{Geometry-Conditioned Anisotropic Boundary Correction}
\label{sec:method:correction}

\paragraph{Local geometric descriptor and spatial confidence.}
Given a geometry $G$ and a query point $\mathbf{x}$, GeoABC first constructs a local geometric descriptor for each query point:
\begin{equation}
    \mathbf{g}_G(\mathbf{x})
    =
    \big(
    \phi(\mathbf{x}),\,
    \hat{\mathbf{t}}(\mathbf{x}),\,
    \hat{\mathbf{n}}(\mathbf{x}),\,
    \kappa(\mathbf{x}),\,
    w(\mathbf{x}),\,
    c(\mathbf{x})
    \big).
    \label{eq:geo_descriptor}
\end{equation}
Here, $\phi(\mathbf{x})$ denotes the signed distance from the query point to the solid boundary, $\hat{\mathbf{t}}(\mathbf{x})$ and $\hat{\mathbf{n}}(\mathbf{x})$ are the local unit tangent and normal vectors, respectively.
The term $\kappa(\mathbf{x})$ denotes the local curvature.
The weight $w(\mathbf{x})\in[0,1]$ is a smooth narrow-band window that reflects the proximity of $\mathbf{x}$ to the boundary region.
The term $c(\mathbf{x})\in[0,1]$ measures the reliability of the local geometric descriptor.


To make the correction mainly act on reliable near-wall regions, GeoABC defines a spatial confidence:
\begin{equation}
    \omega_G(\mathbf{x})
    =
    w(\mathbf{x})\cdot c(\mathbf{x}).
    \label{eq:spatial_confidence}
\end{equation}
Here, $w(\mathbf{x})$ smoothly controls the spatial strength of the boundary correction according to the distance from $\mathbf{x}$ to the boundary, while $c(\mathbf{x})$ measures the reliability of the local geometric description.
Therefore, when the query point is close to the solid boundary and its geometric descriptor is reliable, $\omega_G(\mathbf{x})$ takes a relatively large value, and GeoABC injects a stronger direction-aware correction into the intermediate representation at this location.
As the query point moves away from the boundary, $w(\mathbf{x})$ gradually decreases, so $\omega_G(\mathbf{x})$ smoothly approaches $0$.
This reduces the influence of GeoABC on the far-field or bulk-region representations.
In practice, all geometric descriptors can be precomputed offline and stored as a geometry cache.

\paragraph{Anchor projection and local-frame decomposition.}
After obtaining the local geometric frame, GeoABC needs to convert the intermediate representation of the backbone into a space where directional decomposition can be performed.
Let $\mathbf{z}_{\ell}(\mathbf{x})\in\mathbb{R}^{d_z}$ denote the backbone's intermediate representation at layer $\ell$ and query point $\mathbf{x}$.
Although $\mathbf{z}_{\ell}(\mathbf{x})$ already contains physical context extracted by the backbone from the global input, its channels do not have explicit geometric directions.
Therefore, we cannot directly identify tangential or normal components from these latent channels.

To address this issue, GeoABC first maps the high-dimensional intermediate representation to a two-dimensional anchor space with physical directional meaning through a pointwise linear down-projection $\mathbf{W}_{\downarrow}$:
\begin{equation}
    \mathbf{a}(\mathbf{x})
    =
    \mathbf{W}_{\downarrow} \mathbf{z}_{\ell}(\mathbf{x})
    \in \mathbb{R}^{2}.
    \label{eq:anchor_projection}
\end{equation}
In addition, GeoABC uses an anchor alignment loss $\mathcal{L}_{\mathrm{anchor}}$ (Eq.~\ref{eq:anchor_loss}) to constrain $\mathbf{a}(\mathbf{x})$ and align it with physical directional quantities.
For example, in the $(u,v,p)$ setting, $\mathbf{a}(\mathbf{x})$ can be aligned with the velocity vector $(u,v)$.
In the pressure-only setting, $\mathbf{a}(\mathbf{x})$ can be aligned with the pressure-gradient direction $\nabla p$.
In this way, the anchor space is no longer an arbitrary latent space, but rather a low-dimensional space with a clear physical directional meaning.

GeoABC then uses the local tangent and normal vectors from the geometry cache to construct a solid-boundary frame, providing a structural prior for the anchor representation:
\begin{equation}
    \mathbf{B}_G(\mathbf{x})
    =
    \big[
    \hat{\mathbf{t}}(\mathbf{x}),\,
    \hat{\mathbf{n}}(\mathbf{x})
    \big]
    \in \mathbb{R}^{2\times 2}.
    \label{eq:local_frame}
\end{equation}
The two columns of $\mathbf{B}_G(\mathbf{x})$ correspond to the local tangential and normal directions, respectively.
Projecting the anchor $\mathbf{a}(\mathbf{x})$ onto this local frame gives
\begin{equation}
    \mathbf{a}_B(\mathbf{x})
    =
    \mathbf{B}_G(\mathbf{x})^{\top} \mathbf{a}(\mathbf{x})
    =
    \begin{bmatrix}
    a_t(\mathbf{x})\\
    a_n(\mathbf{x})
    \end{bmatrix}
    \in \mathbb{R}^{2},
    \label{eq:anchor_frame_projection}
\end{equation}
where $a_t(\mathbf{x})$ and $a_n(\mathbf{x})$ denote the tangential and wall-normal coordinates of the anchor.
It is worth noting that they are not a tangential-normal decomposition of the final physical field.
Instead, they serve as local-coordinate inputs for the subsequent direction-aware correction.

\paragraph{Direction-aware correction branches and geometric modulation.}
After obtaining the local coordinates $(a_t(\mathbf{x}),a_n(\mathbf{x}))$, GeoABC generates direction-aware correction signals in the boundary-aligned $(t,n)$ frame.
Specifically, we first extract an auxiliary context from the same intermediate representation $\mathbf{z}_{\ell}(\mathbf{x})$:
\begin{equation}
    \mathbf{h}(\mathbf{x})
    =
    \mathbf{W}_h \mathbf{z}_{\ell}(\mathbf{x}).
    \label{eq:aux_context}
\end{equation}
Here, $\mathbf{h}(\mathbf{x})$ preserves the local and global physical information already encoded by the backbone.
GeoABC then uses three pointwise MLP branches to generate correction signals along the tangential, normal, and cross directions:
\begin{align}
    r_t(\mathbf{x})
    &=
    f_t\big(a_t(\mathbf{x}),\mathbf{h}(\mathbf{x})\big),
    \label{eq:tangent_branch}
    \\
    r_n(\mathbf{x})
    &=
    f_n\big(a_n(\mathbf{x}),\mathbf{h}(\mathbf{x})\big),
    \label{eq:normal_branch}
    \\
    r_{m}(\mathbf{x})
    &=
    f_{m}\big(a_t(\mathbf{x}),a_n(\mathbf{x}),\mathbf{h}(\mathbf{x})\big).
    \label{eq:cross_branch}
\end{align}
The tangential branch $f_t$ generates a scalar correction signal along the boundary, the normal branch $f_n$ generates a scalar correction signal normal to the wall, and the cross branch $f_m$ models the coupled correction between the tangential and normal directions.

Since the local geometric conditions vary across boundary locations, the importance of different directional corrections should also change from point to point.
Therefore, GeoABC further uses a lightweight geometric modulator to predict three gating coefficients from the local geometric descriptor $\mathbf{g}_G(\mathbf{x})$:
\begin{equation}
    \big(
    \alpha_t(\mathbf{x}),\,
    \alpha_n(\mathbf{x}),\,
    \alpha_{m}(\mathbf{x})
    \big)
    =
    \sigma
    \left(
    \Gamma_{\psi}\big(\mathbf{g}_G(\mathbf{x})\big)
    \right).
    \label{eq:directional_gates}
\end{equation}
Here, $\Gamma_{\psi}$ is the geometric modulator, and $\sigma(\cdot)$ denotes the sigmoid function.
The three gating coefficients control the strengths of the tangential, normal, and cross corrections, respectively.
In other words, the correction at the same near-wall point is not fixed.
Instead, GeoABC adaptively adjusts the contribution of each directional correction according to the local geometric condition.

Since the cross branch produces a coupled correction signal, GeoABC further maps it to a two-dimensional correction in the local frame space:
\begin{equation}
    \mathbf{q}_{m}(\mathbf{x})
    =
    \mathbf{W}_{m} r_{m}(\mathbf{x})
    \in \mathbb{R}^{2}.
    \label{eq:cross_lift}
\end{equation}
where $\mathbf{W}_m$ is a learnable linear lifting map.
It converts the output of the cross branch into a two-dimensional correction vector in the local $(t,n)$ frame.
As a result, the cross correction can be composed with the tangential and normal corrections in the same local coordinate system.

\paragraph{Directional correction synthesis and mid-layer write-back.}
After obtaining the three direction-aware correction signals, GeoABC first composes the overall boundary correction in the local tangent-normal frame:
\begin{equation}
    \mathbf{c}_B(\mathbf{x})
    =
    \begin{bmatrix}
        \alpha_t(\mathbf{x}) r_t(\mathbf{x})
        \\
        \alpha_n(\mathbf{x}) r_n(\mathbf{x})
    \end{bmatrix}
    +
    \alpha_{m}(\mathbf{x}) \mathbf{q}_{m}(\mathbf{x})
    \in \mathbb{R}^{2}.
    \label{eq:frame_correction}
\end{equation}
The first term represents the geometrically gated tangential and normal corrections.
The second term represents the coupled correction between the tangential and normal directions.
Since $\mathbf{c}_B(\mathbf{x})$ is defined in the local tangent-normal coordinate system, GeoABC then maps it back to the global two-dimensional anchor space using the local frame $\mathbf{B}_G(\mathbf{x})$:
\begin{equation}
    \mathbf{c}_{\mathrm{amb}}(\mathbf{x})
    =
    \mathbf{B}_G(\mathbf{x})\mathbf{c}_B(\mathbf{x})
    \in \mathbb{R}^{2}.
    \label{eq:ambient_correction}
\end{equation}
The vector $\mathbf{c}_{\mathrm{amb}}(\mathbf{x})$ can be viewed as a two-dimensional boundary correction with global directional information.
However, it must be lifted back to the latent dimension of the backbone before being written into the original intermediate representation.
GeoABC performs this step through an up-projection $\mathbf{W}_{\uparrow}$ and updates the intermediate representation in a residual form:
\begin{equation}
    \tilde{\mathbf{z}}_{\ell}(\mathbf{x})
    =
    \mathbf{z}_{\ell}(\mathbf{x})
    +
    \omega_G(\mathbf{x})
    \left(
        \mathbf{W}_{\uparrow} \mathbf{c}_{\mathrm{amb}}(\mathbf{x})
        +
        \mathbf{W}_a \mathbf{h}(\mathbf{x})
    \right).
    \label{eq:latent_write_back}
\end{equation}
Here, $\mathbf{W}_a\mathbf{h}(\mathbf{x})$ introduces an auxiliary correction that is consistent with the intermediate context of the backbone.
This update completes the anisotropic boundary correction at layer $\ell$.
The corrected representation $\tilde{\mathbf{z}}_{\ell}(\mathbf{x})$ is then passed to the remaining operator blocks and decoder to produce the final physical prediction.
The 3D extension is provided in Appendix~\ref{sec:appendix:adapt} and additional theoretical analysis is provided in Appendix~\ref{app:theory}.

\paragraph{Training Objective} The overall objective consists of a global prediction loss, a boundary-weighted loss, and an anchor alignment loss:
\begin{equation}
    \mathcal{L}
    =
    \mathcal{L}_{\mathrm{global}}
    +
    \lambda_b \mathcal{L}_{\mathrm{boundary}}
    +
    \lambda_a \mathcal{L}_{\mathrm{anchor}} .
    \label{eq:total_loss}
\end{equation}
Detailed formulations of the three loss terms are provided in Appendix~\ref{app:loss}.

\section{Experiments}
\label{sec:experiments}

\subsection{Experimental Setup}
\label{sec:exp_setup}

\paragraph{Datasets.}
We evaluate GeoABC on two steady-state aerodynamic benchmarks.
The 2D NACA airfoil dataset~\citep{li2023geofno} contains 1000 training samples and 200 test samples, where the task is to predict $(u,v,p)$ around airfoil geometries.
The 3D Shape-Net Car dataset~\citep{umetani2018learning} contains 789 training cars and 100 test cars, where the task is to predict $(u,v,w,p)$ around car geometries.
Additional dataset details are provided in Appendix~\ref{sec:appendix:datasets}.


\paragraph{Baselines.}
We evaluate GeoABC as a plug-in correction module on five representative neural operator backbones spanning four operator families: spectral (Geo-FNO~\citep{li2023geofno}), convolutional (CNO~\citep{raonic2023cno}), graph/kernel (GINO~\citep{li2023gino}), and transformer-based (GNOT~\citep{hao2023gnot}, Transolver~\citep{wu2024transolver}).
This coverage allows us to test whether GeoABC can be inserted into different architectures without changing their operator-learning formulation.
Per-backbone configurations and detailed descriptions of each baseline are provided in Appendix~\ref{sec:appendix:hparams}.


\paragraph{Metrics.}
We separately report global-field accuracy and near-wall accuracy.
For the 2D NACA benchmark, we use global relative $L_2$ and near-wall relative $L_2$ as primary metrics, with component-wise near-wall errors for $u$, $v$, $p$ and boundary pressure metrics ($C_p$ MAE, wall-pressure MAE) as secondary metrics.
For the 3D Shape-Net Car benchmark, we report volumetric field errors, wetted-surface metrics including surface velocity relative $L_2$ and surface pressure MAE, and the drag coefficient error $C_D$.
Detailed metric definitions are provided in Appendix~\ref{sec:appendix:metrics}.


\subsection{Main Results}
\label{sec:main_results}
\begin{table}[t]
\centering
\caption{Cross-backbone performance on the 2D NACA airfoil benchmark.}
\label{tab:cross_backbone}
\setlength{\tabcolsep}{3pt}
\resizebox{\linewidth}{!}{
\begin{tabular}{l@{\hskip 5pt}cccccc}
\toprule
Method
& Global rel-$L_2$ $\downarrow$
& Near-wall rel-$L_2$ $\downarrow$
& $u$ near rel-$L_2$ $\downarrow$
& $v$ near rel-$L_2$ $\downarrow$
& $p$ near rel-$L_2$ $\downarrow$
& $C_p$ MAE $\downarrow$ \\
\midrule
\textbf{Geo-FNO} & 
$0.0608\!\pm\!.003$ & $0.1320\!\pm\!.006$ & $0.1266\!\pm\!.006$ & $0.2803\!\pm\!.024$ & $0.1343\!\pm\!.006$ & $0.2085\!\pm\!.009$ \\
\quad $+$ GeoABC  & 
$\mathbf{0.0297\!\pm\!.001}$ & 
$\mathbf{0.0369\!\pm\!.001}$ & 
$\mathbf{0.0360\!\pm\!.001}$ & 
$\mathbf{0.0716\!\pm\!.003}$ & 
$\mathbf{0.0370\!\pm\!.002}$ & 
$\mathbf{0.0426\!\pm\!.004}$ \\
\midrule
\textbf{CNO} & $0.0194\!\pm\!.004$ & $0.0398\!\pm\!.009$ & $0.0392\!\pm\!.010$ & $0.0975\!\pm\!.013$ & $0.0379\!\pm\!.009$ & $0.0553\!\pm\!.014$ \\
\quad $+$ GeoABC& 
$\mathbf{0.0171\!\pm\!.003}$ & 
$\mathbf{0.0242\!\pm\!.004}$ & 
$\mathbf{0.0237\!\pm\!.004}$ & 
$\mathbf{0.0569\!\pm\!.003}$ & 
$\mathbf{0.0236\!\pm\!.004}$ & 
$\mathbf{0.0270\!\pm\!.003}$ \\
\midrule
\textbf{GNOT} & $0.0333\!\pm\!.009$ & $0.0715\!\pm\!.020$ & $0.0684\!\pm\!.019$ & $0.1794\!\pm\!.043$ & $0.0712\!\pm\!.020$ & $0.1013\!\pm\!.026$ \\
\quad $+$ GeoABC & 
$\mathbf{0.0240\!\pm\!.001}$ & 
$\mathbf{0.0396\!\pm\!.003}$ & 
$\mathbf{0.0388\!\pm\!.002}$ & 
$\mathbf{0.1074\!\pm\!.008}$ & 
$\mathbf{0.0376\!\pm\!.004}$ & 
$\mathbf{0.0433\!\pm\!.001}$ \\
\midrule
\textbf{Transolver} & $0.0359\!\pm\!.003$ & $0.0409\!\pm\!.004$ & $0.0181\!\pm\!.002$ & $0.0868\!\pm\!.007$ & $0.0177\!\pm\!.002$ & $0.0143\!\pm\!.001$ \\
\quad $+$ GeoABC & 
$\mathbf{0.0317\!\pm\!.002}$ & 
$\mathbf{0.0355\!\pm\!.002}$ & 
$\mathbf{0.0166\!\pm\!.001}$ & 
$\mathbf{0.0737\!\pm\!.005}$ & 
$\mathbf{0.0162\!\pm\!.001}$ & 
$\mathbf{0.0118\!\pm\!.001}$ \\
\bottomrule
\end{tabular}
}
\end{table}

\begin{figure}[t]
    \centering
    \includegraphics[width=0.95\linewidth]{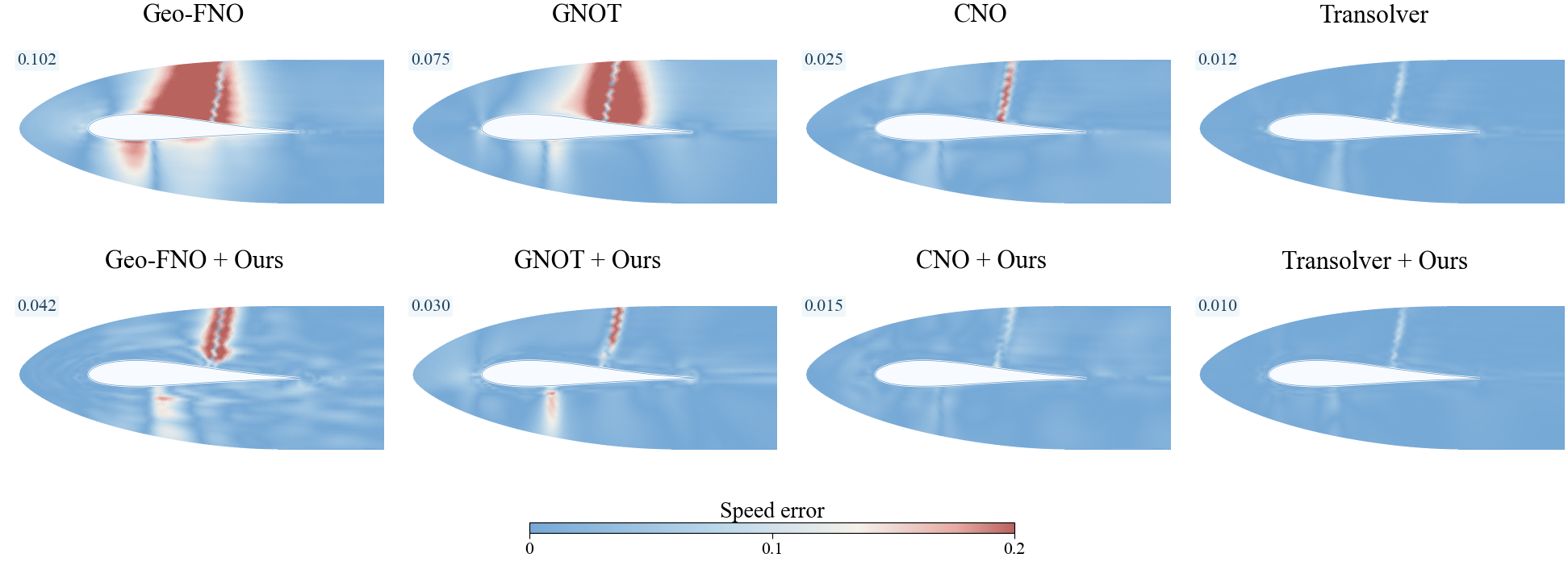}
\caption{
Velocity-magnitude prediction and error maps on the 2D NACA benchmark.
The numbers in the error maps denote near-wall relative $\ell_2$ error.
}   

    \label{fig:2d_vis_main}
\end{figure}

\paragraph{2D prediction.}

Before evaluating GeoABC, we first examine whether standard boundary-aware remedies can close the near-wall gap.
As shown in Fig.~\ref{fig:intro}, our diagnostic experiments show that these methods do not effectively reduce near-wall error and can even degrade prediction on stronger backbones.
This indicates that the near-wall gap cannot be resolved by simply enhancing boundary awareness or adding boundary-specific capacity. Detailed settings and results are provided in Appendix~\ref{sec:appendix:boundary_awareness}.

\Cref{tab:cross_backbone} shows that GeoABC delivers substantial and consistent gains across all four 2D backbones.
Across spectral, convolutional, and attention-based operators, GeoABC reduces near-wall rel-$L_2$ by $13.1\%$--$72.0\%$ and lowers $C_p$ MAE by $17.4\%$--$79.6\%$.
The improvement appears not only on Geo-FNO, but also on CNO, GNOT, and the strongest 2D backbone, Transolver, suggesting that the near-wall gap is not merely a symptom of insufficient backbone capacity.

\Cref{fig:2d_vis_main} further visualizes the speed-error fields.
The original backbones show pronounced wall-normal error bands near the airfoil surface, indicating a directional near-wall error pattern.
GeoABC consistently suppresses these bands across backbones, with especially clear reductions on Geo-FNO and GNOT.
This supports that GeoABC corrects anisotropic near-wall errors rather than merely improving the field uniformly.

The per-metric pattern is consistent with this visual evidence.
On every backbone, the relative reductions in near-wall rel-$L_2$ and $C_p$ MAE are larger than the reduction in global rel-$L_2$.
Therefore, GeoABC mainly improves boundary-sensitive quantities, especially $C_p$, whose accuracy directly depends on pressure prediction along the airfoil surface, reflecting the effective constraint imposed by $\omega_G(x)$.
Overall, the consistent gains across different operator families support GeoABC as a plug-in anisotropic boundary correction module rather than a backbone-specific modification.

\begin{table}[t]
\centering
\footnotesize
\caption{Cross-backbone performance on the 3D Car benchmark.}
\label{tab:car3d}
\setlength{\tabcolsep}{3pt}
\resizebox{\linewidth}{!}{
\begin{tabular}{l@{\hskip 5pt}cccccc}
\toprule
Method
& Volume $v$ rel-$L_2$ $\downarrow$
& Volume $p$ rel-$L_2$ $\downarrow$
& Surface $v$ rel-$L_2$ $\downarrow$
& Surface $p$ MAE $\downarrow$
& $C_D$ error $\downarrow$
& $\rho_D$ $\uparrow$ \\
\midrule
\textbf{Geo-FNO} & $0.0365\!\pm\!.000$ & $0.1103\!\pm\!.000$ & $0.0447\!\pm\!.000$ & $3.177\!\pm\!.008$ & $0.0223\!\pm\!.001$ & $0.9649\!\pm\!.001$ \\
\quad $+$ GeoABC & 
$\mathbf{0.0355\!\pm\!.000}$ & 
$\mathbf{0.1030\!\pm\!.000}$ & 
$\mathbf{0.0397\!\pm\!.000}$ & 
$\mathbf{2.922\!\pm\!.007}$ & 
$0.0233\!\pm\!.001$ & 
$\mathbf{0.9712\!\pm\!.001}$ \\
\midrule
\textbf{GINO} & $0.0355\!\pm\!.001$ & $0.1153\!\pm\!.004$ & $0.0435\!\pm\!.001$ & $3.288\!\pm\!.080$ & $0.0862\!\pm\!.002$ & $0.8093\!\pm\!.006$ \\
\quad $+$ GeoABC & 
$\mathbf{0.0326\!\pm\!.000}$ & 
$\mathbf{0.1030\!\pm\!.000}$ & 
$\mathbf{0.0360\!\pm\!.000}$ & 
$\mathbf{2.912\!\pm\!.001}$ & 
$\mathbf{0.0794\!\pm\!.003}$ & 
$\mathbf{0.8249\!\pm\!.005}$ \\
\midrule
\textbf{GNOT} & $0.0424\!\pm\!.000$ & $0.0887\!\pm\!.001$ & $0.0655\!\pm\!.000$ & $2.314\!\pm\!.024$ & $0.0149\!\pm\!.001$ & $0.9888\!\pm\!.001$ \\
\quad $+$ GeoABC & 
$\mathbf{0.0276\!\pm\!.000}$ & 
$\mathbf{0.0862\!\pm\!.002}$ & 
$\mathbf{0.0322\!\pm\!.001}$ & 
$\mathbf{2.230\!\pm\!.034}$ & 
$\mathbf{0.0138\!\pm\!.001}$ & 
$\mathbf{0.9890\!\pm\!.002}$ \\
\midrule
\textbf{Transolver} & $0.0433\!\pm\!.001$ & $0.0854\!\pm\!.000$ & $0.0656\!\pm\!.001$ & $2.299\!\pm\!.010$ & $0.0152\!\pm\!.003$ & $0.9859\!\pm\!.001$ \\
\quad $+$ GeoABC & 
$\mathbf{0.0273\!\pm\!.001}$ & 
$\mathbf{0.0823\!\pm\!.000}$ & 
$\mathbf{0.0314\!\pm\!.000}$ & 
$\mathbf{2.078\!\pm\!.012}$ & 
$\mathbf{0.0143\!\pm\!.001}$ & 
$\mathbf{0.9892\!\pm\!.002}$ \\
\bottomrule
\end{tabular}
}
\end{table}

\paragraph{3D prediction.}
We further evaluate whether GeoABC can extend to complex 3D car geometries.
As shown in \Cref{tab:car3d}, GeoABC consistently improves boundary-related metrics across all 3D backbones.
Surface velocity rel-$L_2$ decreases on every backbone, with reductions ranging from $11.2\%$ to $52.1\%$, and surface pressure MAE is also reduced across all backbones.
These improvements are concentrated on wetted-surface quantities, which are the most relevant quantities for boundary-accurate aerodynamic prediction.

\Cref{fig:3d_vis} further visualizes the 3D velocity field.
Compared with the original backbones, GeoABC produces surrounding streamlines that are closer to the ground truth and reduces the visualization error across multiple backbones. Additional visualizations are provided in Appendix~\ref{sec:appendix:visualization}.

The improvement further transfers to downstream aerodynamic metrics.
GeoABC reduces the drag-coefficient error $C_D$ on GINO, GNOT, and Transolver, while preserving or improving the Spearman correlation $\rho_D$ between predicted and ground-truth $C_D$ across all backbones.
Since $C_D$ and $\rho_D$ measure whether the surrogate can support drag-based aerodynamic evaluation and design ranking, these results indicate that improving boundary-region prediction leads to more reliable engineering-level predictions.
Overall, the 3D results show that GeoABC does not rely on the 2D airfoil geometry, but remains effective as a plug-in boundary correction module for complex 3D aerodynamic geometries.

\begin{figure}[t]
    \centering
    \includegraphics[width=\linewidth]{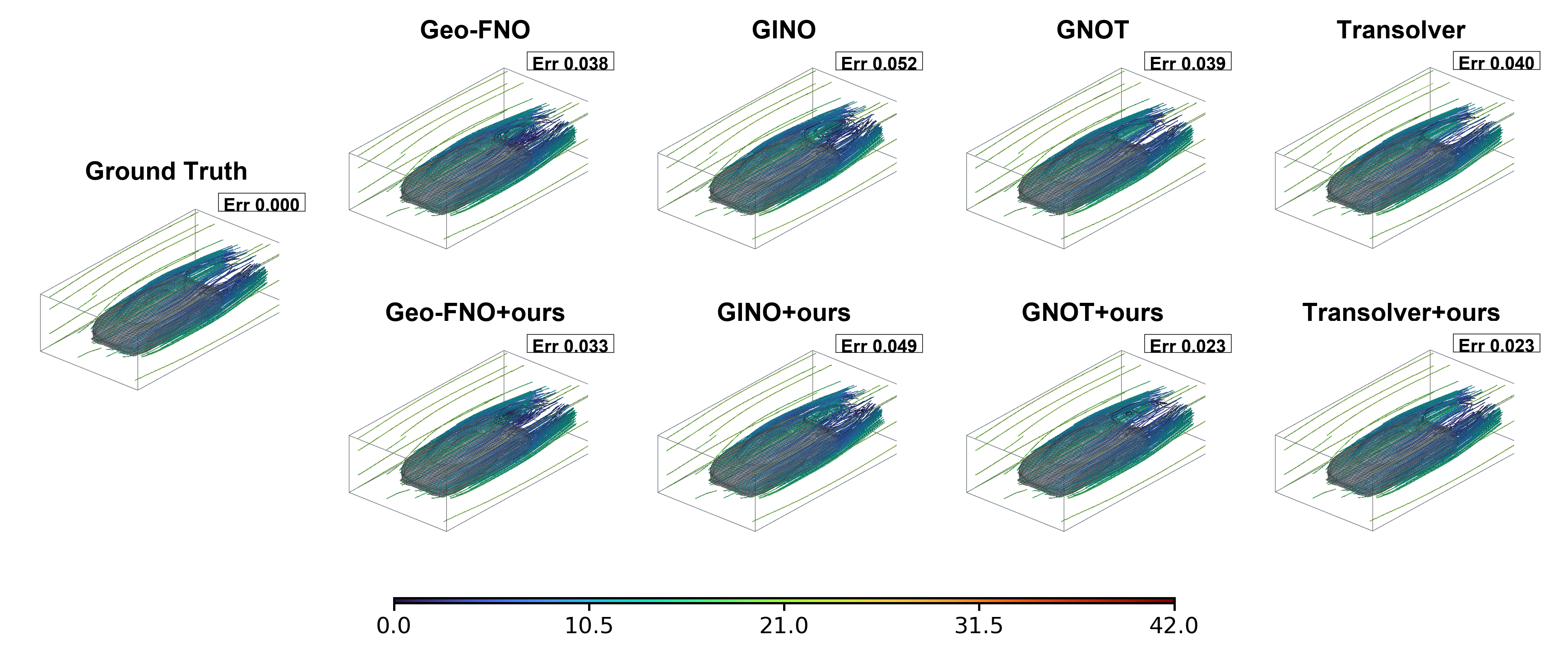}
    \caption{
3D visualization of the surrounding velocity field around the car, showing airflow streamlines and the vehicle surface.
}
    \label{fig:3d_vis}
\end{figure}

\subsection{Ablation Study}
\label{sec:ablation}
\label{sec:exp:isolation}

Our ablation study, summarized in Fig.~\ref{fig:ablation_radar}, evaluates the contribution of each design component in GeoABC.
We compare the full model with four ablated variants.
\textbf{no-inject} removes the mid-layer boundary injection module while keeping the same training losses, testing whether the gain comes merely from boundary-related supervision.
\textbf{no-frame} keeps the boundary-correction module but removes the local tangent--normal frame, testing whether the gain comes from the structural prior of solid boundaries rather than boundary-local correction capacity alone.
\textbf{normal-only} keeps the local frame but uses only the wall-normal branch, removing the tangential and cross-direction branches.
\textbf{zero-geometry} keeps the boundary-correction structure but removes the geometry-conditioned spatial gate, testing whether the model needs geometry to localize where the correction should be activated.


\begin{figure}[t]
    \centering
    \includegraphics[width=1.0\linewidth]{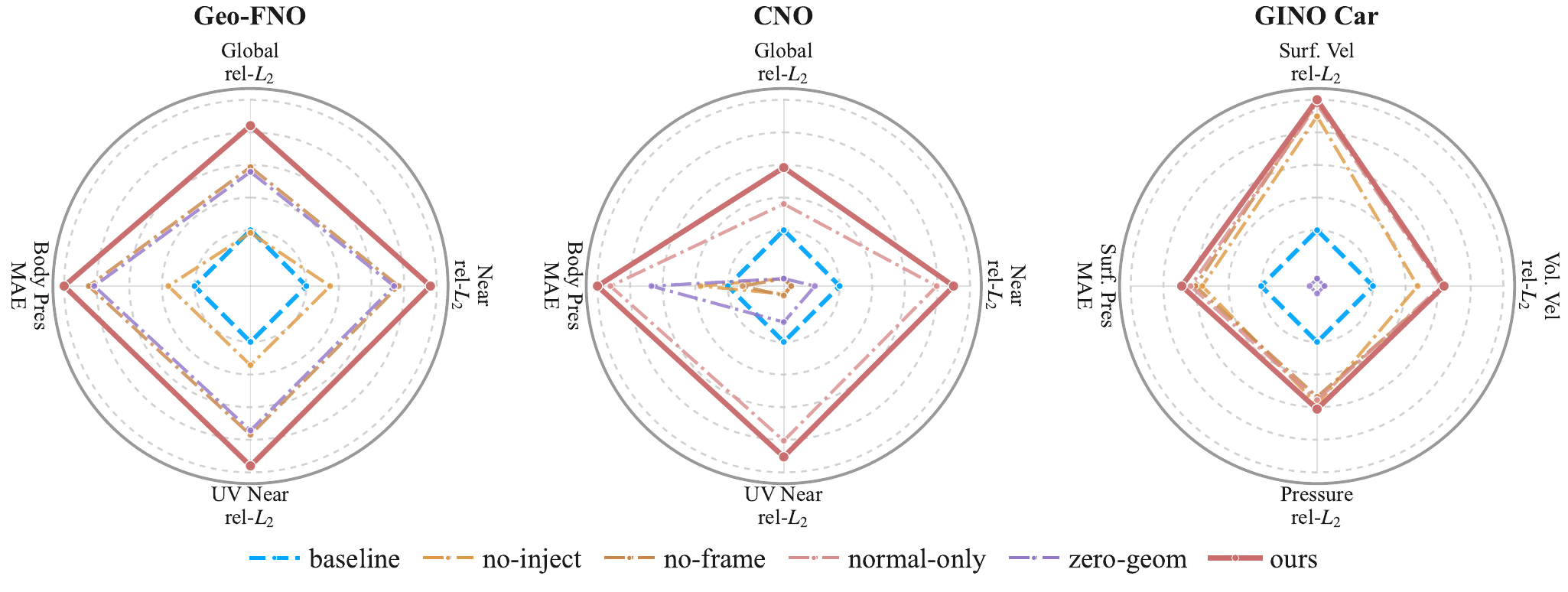}
    \caption{Component isolation of GeoABC, comparing the full GeoABC with four ablated variants on representative 2D airfoil and 3D car experiments.}
    \label{fig:ablation_radar}
\end{figure}


As shown in Fig.~\ref{fig:ablation_radar}, the full GeoABC consistently outperforms the ablated variants.
The weak performance of \textbf{no-inject} shows that boundary-related losses alone are not sufficient to explain the improvement.
The degradation of \textbf{no-frame} shows that boundary correction must be organized by the local directions induced by solid boundaries, rather than applied in an unstructured latent space.
\textbf{normal-only} recovers part of the gain, confirming the importance of wall-normal responses near solid boundaries; however, the full model remains stronger overall, suggesting that tangential and cross-directional corrections also contribute to modeling anisotropic near-wall physics.


The \textbf{zero-geometry} variant further shows that geometry-conditioned spatial gating is needed to localize where the correction should be activated.
Together, these ablations show that GeoABC’s gains come not from extra boundary-local capacity alone, but from the joint use of mid-layer correction, solid-boundary directional structure, anisotropic branches, and geometry-conditioned spatial gating. Detailed numerical results and additional ablation analyses are provided in Appendix~\ref{sec:appendix:component_details}.


\subsection{Additional Experiments}
\label{sec:additional_experimental_evidence}

Beyond the main results and ablations, we provide additional experiments in the appendix.
Appendix~\ref{sec:appendix:spatial_attribution} shows that the improvement of GeoABC is a general boundary-band effect rather than an accidental gain from a specific local region; it also shows that the near-wall improvement partly propagates to the surrounding flow field, further supporting mid-layer correction.
Appendix~\ref{sec:appendix:pipe} further evaluates GeoABC on a new Pipe NS task, showing that the method does not rely on curved geometries and can also transfer to straight no-slip walls.
Appendix~\ref{sec:appendix:injection_position} shows that middle-layer insertion is the most effective choice, where the backbone has already formed sufficient physical context while the corrected representation can still interact with other regions through the remaining operator layers.
Appendix~\ref{sec:appendix:lambda_b_band_sensitivity} shows that the improvement of GeoABC is stable across boundary-loss weights and boundary-band definitions, indicating that the near-wall gain is not caused by a finely tuned loss weight or a particular narrow-band choice.

\section{Conclusion}
\label{sec:conclusion}

This paper presents GeoABC, which explicitly leverages solid boundary geometry as a structural prior to mitigate the prediction errors caused by isotropic near-wall modeling in aerodynamic neural operators. GeoABC consists of two main mechanisms: local tangent-normal decomposition, which characterizes the difference between tangential propagation along the wall and strong wall-normal variations, and geometry-gated boundary correction, which performs direction-aware modulation within trusted boundary regions in the intermediate representation space. Experiments show that GeoABC can be integrated into various neural operator backbones as a plug-in module, consistently reducing near-wall errors and improving boundary-related metrics on both 2D airfoil and 3D car tasks. This work demonstrates that injecting the anisotropic structure of classical boundary physics into neural operators can improve their reliability in high-fidelity aerodynamic simulation and engineering-level aerodynamic metric prediction.


\bibliography{references}
\bibliographystyle{unsrt}

\newpage
\appendix
\section{Theoretical Properties of GeoABC}
\label{app:theory}

This appendix formalizes GeoABC as a geometry-gated near-wall boundary
correction and records three structural properties of the resulting
representation. The goal is not to claim that a particular injection
layer, branch design, or training strategy is universally optimal, nor
to provide a stability guarantee for the underlying operator-learning
problem. Instead, we record (i) a write-back locality property by
construction, (ii) a Lipschitz-type spatial regularity bound on the
corrected representation under standard regularity assumptions, and
(iii) an identity-reachability property which ensures that GeoABC does
not reduce the approximation capacity of the original backbone. We
view these properties as sanity checks that clarify what GeoABC does
and does not change at the representation level; the empirical claims
of the paper are supported by the experiments in
Sec.~\ref{sec:experiments}, not by these properties.

\subsection{Anisotropy Motivation}

Near a solid boundary $\partial\Omega_G$, aerodynamic fields typically
vary differently along the tangential and normal directions, and the
appropriate notion of anisotropy depends on the flow regime.

\paragraph{Viscous regime.} For viscous flows, classical Prandtl
boundary-layer scaling gives
\begin{equation}
    \delta \sim L\,\mathrm{Re}^{-1/2},
\end{equation}
where $L$ is the characteristic length, $\mathrm{Re}$ is the Reynolds
number, and $\delta$ is the boundary-layer thickness. If the
tangential velocity has the boundary-layer form
$u_t(s,n)=U(s/L,\,n/\delta)$, the chain rule gives
\begin{equation}
    \left|\partial_n u_t\right| \sim 1/\delta, \qquad
    \left|\partial_s u_t\right| \sim 1/L,
\end{equation}
so the scale ratio between normal and tangential variations is
\begin{equation}
    \frac{|\partial_n u_t|}{|\partial_s u_t|}
    \sim L/\delta \sim \mathrm{Re}^{1/2} \gg 1.
\end{equation}
This applies to the 3D RANS car benchmark used in our experiments.

\paragraph{Inviscid regime.} For the 2D transonic Euler benchmark, the
no-penetration slip wall does not produce a Prandtl boundary layer,
and the tangential wall velocity is generally nonzero. The relevant
near-wall anisotropy is instead driven by wall-tangential pressure
gradients, suction-peak structures, and possible shock-induced normal
jumps. In both cases, near-wall fields exhibit different magnitudes
of variation along the wall-tangent and wall-normal directions, which
motivates organizing the boundary correction in a local tangent-normal
frame
\begin{equation}
    \mathbf{B}_G(\mathbf{x})
    = [\hat{\mathbf{t}}(\mathbf{x}),\,\hat{\mathbf{n}}(\mathbf{x})]
\end{equation}
rather than in the ambient coordinate frame. We do not claim a
quantitative scale law that applies uniformly across both regimes;
the role of the local frame is structural, not asymptotic.

\subsection{Notation}

We follow the notation in Sec.~\ref{sec:method:correction}. For a
geometry $G$, let $\Omega_G$ denote the fluid domain and
$\partial\Omega_G$ its solid boundary. The near-wall band is
\begin{equation}
    \Omega_{\delta}(G)
    = \{\mathbf{x}\in\Omega_G:
       \operatorname{dist}(\mathbf{x},\partial\Omega_G)<\delta\}.
\end{equation}
GeoABC precomputes the local geometric descriptor
\begin{equation}
    \mathbf{g}_G(\mathbf{x})
    = \big(\phi(\mathbf{x}),\,
           \hat{\mathbf{t}}(\mathbf{x}),\,\hat{\mathbf{n}}(\mathbf{x}),\,
           \kappa(\mathbf{x}),\,
           w(\mathbf{x}),\,c(\mathbf{x})\big),
\end{equation}
where $\phi$ is the signed distance to $\partial\Omega_G$,
$\hat{\mathbf{t}},\hat{\mathbf{n}}$ are local unit tangent and normal
vectors satisfying $\hat{\mathbf{t}}\perp\hat{\mathbf{n}}$ and
$\|\hat{\mathbf{t}}\|=\|\hat{\mathbf{n}}\|=1$ wherever they are
defined, $\kappa$ is the local curvature,
$w(\mathbf{x})\in[0,1]$ is the near-wall window, and
$c(\mathbf{x})\in[0,1]$ is the confidence score of the geometric
descriptor. The geometry-confidence gate is
\begin{equation}
    \omega_G(\mathbf{x}) = w(\mathbf{x})\,c(\mathbf{x}).
\end{equation}
The scalar $c(\mathbf{x})$ should not be confused with
$\mathbf{c}_B(\mathbf{x})$, which denotes the local-frame correction
introduced below.

\paragraph{Geometric singularities.}
The frame $\hat{\mathbf{t}},\hat{\mathbf{n}}$ is defined via the
nearest-point projection onto $\partial\Omega_G$, which is
single-valued and smooth away from the medial axis of $\Omega_G$ and
away from non-smooth points of $\partial\Omega_G$ (e.g., the trailing
edge of an airfoil). Throughout this appendix, regularity statements
are made on compact subsets $K\subset\Omega_G$ that exclude a
neighborhood of these singular sets. In practice, the confidence
score $c(\mathbf{x})$ is designed to vanish smoothly near such
regions, which is consistent with this restriction.

Let $\mathbf{z}_{\ell}(\mathbf{x})\in\mathbb{R}^{d_z}$ be the
backbone's intermediate representation at layer $\ell$. GeoABC
projects this representation into a low-dimensional anchor space,
\begin{equation}
    \mathbf{a}(\mathbf{x}) = \mathbf{W}_{\downarrow}\,
    \mathbf{z}_{\ell}(\mathbf{x}) \in \mathbb{R}^{2},
\end{equation}
and extracts an auxiliary context vector
\begin{equation}
    \mathbf{h}(\mathbf{x}) = \mathbf{W}_{h}\,\mathbf{z}_{\ell}(\mathbf{x}).
\end{equation}
In the 2D setting the local tangent-normal frame is
$\mathbf{B}_G(\mathbf{x})=[\hat{\mathbf{t}}(\mathbf{x}),
\hat{\mathbf{n}}(\mathbf{x})]\in\mathbb{R}^{2\times 2}$, and the anchor
representation is projected into this local frame as
\begin{equation}
    \mathbf{a}_B(\mathbf{x})
    = \mathbf{B}_G(\mathbf{x})^{\top}\mathbf{a}(\mathbf{x})
    = \begin{bmatrix} a_t(\mathbf{x}) \\ a_n(\mathbf{x}) \end{bmatrix}
      \in \mathbb{R}^{2}.
\end{equation}
Because $\hat{\mathbf{t}}\perp\hat{\mathbf{n}}$ and both are unit
length, $\mathbf{B}_G$ is orthogonal, so
$\mathbf{B}_G^{\top}=\mathbf{B}_G^{-1}$. GeoABC computes three scalar
branch outputs
\begin{align}
    r_t(\mathbf{x}) &= f_t\!\big(a_t(\mathbf{x}),\,\mathbf{h}(\mathbf{x})\big), \\
    r_n(\mathbf{x}) &= f_n\!\big(a_n(\mathbf{x}),\,\mathbf{h}(\mathbf{x})\big), \\
    r_m(\mathbf{x}) &= f_m\!\big(a_t(\mathbf{x}),a_n(\mathbf{x}),\,\mathbf{h}(\mathbf{x})\big),
\end{align}
where the subscript $m$ denotes the cross tangent-normal branch.
Geometry-conditioned gates are
\begin{equation}
    \big(\alpha_t(\mathbf{x}),\,\alpha_n(\mathbf{x}),\,\alpha_m(\mathbf{x})\big)
    = \sigma\!\left(\Gamma_{\psi}\big(\mathbf{g}_G(\mathbf{x})\big)\right).
\end{equation}
The cross branch is lifted to a 2D correction in the local frame,
$\mathbf{q}_m(\mathbf{x})=\mathbf{W}_m r_m(\mathbf{x})$, and the
local-frame correction is
\begin{equation}
    \mathbf{c}_B(\mathbf{x})
    = \begin{bmatrix}
        \alpha_t(\mathbf{x})\,r_t(\mathbf{x}) \\
        \alpha_n(\mathbf{x})\,r_n(\mathbf{x})
      \end{bmatrix}
      + \alpha_m(\mathbf{x})\,\mathbf{q}_m(\mathbf{x}).
\end{equation}
Mapping back to the ambient anchor space gives
$\mathbf{c}_{\mathrm{amb}}(\mathbf{x})
=\mathbf{B}_G(\mathbf{x})\,\mathbf{c}_B(\mathbf{x})$, and the latent
residual produced by GeoABC is
\begin{equation}
    \Delta\mathbf{z}_{\ell}(\mathbf{x})
    = \mathbf{W}_{\uparrow}\,\mathbf{c}_{\mathrm{amb}}(\mathbf{x})
    + \mathbf{W}_a\,\mathbf{h}(\mathbf{x}).
\end{equation}
Finally, GeoABC writes the residual back to the intermediate
representation as
\begin{equation}
    \tilde{\mathbf{z}}_{\ell}(\mathbf{x})
    = \mathbf{z}_{\ell}(\mathbf{x})
    + \omega_G(\mathbf{x})\,\Delta\mathbf{z}_{\ell}(\mathbf{x}).
    \label{eq:writeback}
\end{equation}
The directional residual $\Delta\mathbf{z}_{\ell}$ determines what is
corrected, and the geometry-confidence gate $\omega_G$ determines
where the correction is applied and how strongly.

\subsection{Boundary-Band Local Write-Back}

The first property is a direct consequence of the write-back rule
\eqref{eq:writeback} and is included only to make explicit how
$\omega_G$ controls the spatial support of the correction at the
injection layer. The statement is given in two forms, depending on
whether the window $w(\mathbf{x})$ is implemented as a compactly
supported (hard) window or as a smooth (soft) window that decays away
from the boundary.

\begin{proposition}[Write-back locality at the injection layer]
\label{prop:locality}
The following hold pointwise for the write-back rule
\eqref{eq:writeback}.
\begin{enumerate}
\item[(i)] \emph{Pointwise identity.}
If $\omega_G(\mathbf{x})=0$, then
$\tilde{\mathbf{z}}_{\ell}(\mathbf{x})=\mathbf{z}_{\ell}(\mathbf{x})$.
\item[(ii)] \emph{Pointwise magnitude bound.}
Since $w(\mathbf{x}),c(\mathbf{x})\in[0,1]$,
\begin{equation}
    \|\tilde{\mathbf{z}}_{\ell}(\mathbf{x})
      - \mathbf{z}_{\ell}(\mathbf{x})\|
    \le \|\Delta\mathbf{z}_{\ell}(\mathbf{x})\|
    \quad\text{for all } \mathbf{x}\in\Omega_G.
\end{equation}
\item[(iii)] \emph{Hard window.}
If the window has compact support inside the near-wall band, i.e.\
$\operatorname{supp}(\omega_G)\subseteq\Omega_{\delta}(G)$, then
$\tilde{\mathbf{z}}_{\ell}(\mathbf{x})=\mathbf{z}_{\ell}(\mathbf{x})$
for every $\mathbf{x}\in\Omega_G\setminus\Omega_{\delta}(G)$.
\item[(iv)] \emph{Soft window.}
More generally, suppose $\omega_G$ is bounded outside the near-wall
band by $\omega_G(\mathbf{x})\le\varepsilon$ for all
$\mathbf{x}\in\Omega_G\setminus\Omega_{\delta}(G)$, and that
$\|\Delta\mathbf{z}_{\ell}(\mathbf{x})\|\le D_{\Delta}$ on a set
$K\subset\Omega_G$. Then for every
$\mathbf{x}\in K\cap\big(\Omega_G\setminus\Omega_{\delta}(G)\big)$,
\begin{equation}
    \|\tilde{\mathbf{z}}_{\ell}(\mathbf{x})
      - \mathbf{z}_{\ell}(\mathbf{x})\|
    \le \varepsilon\,D_{\Delta}.
\end{equation}
\end{enumerate}
\end{proposition}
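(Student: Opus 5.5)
All four items follow from the single identity obtained by rearranging the write-back rule \eqref{eq:writeback}:
\begin{equation*}
    \tilde{\mathbf{z}}_{\ell}(\mathbf{x})-\mathbf{z}_{\ell}(\mathbf{x})
    = \omega_G(\mathbf{x})\,\Delta\mathbf{z}_{\ell}(\mathbf{x}),
\end{equation*}
which holds pointwise for every $\mathbf{x}\in\Omega_G$ at which the residual $\Delta\mathbf{z}_{\ell}(\mathbf{x})$ is defined. Since $\omega_G(\mathbf{x})=w(\mathbf{x})c(\mathbf{x})$ is a nonnegative scalar, absolute homogeneity of the norm gives
\begin{equation*}
    \|\tilde{\mathbf{z}}_{\ell}(\mathbf{x})-\mathbf{z}_{\ell}(\mathbf{x})\|
    = \omega_G(\mathbf{x})\,\|\Delta\mathbf{z}_{\ell}(\mathbf{x})\|.
\end{equation*}
Every claim then reduces to a bound on the scalar factor $\omega_G(\mathbf{x})$.

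For (i), I substitute $\omega_G(\mathbf{x})=0$ into the identity. The difference vanishes regardless of the value of $\Delta\mathbf{z}_{\ell}(\mathbf{x})$, as long as that value is finite.

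For (ii), I use $w,c\in[0,1]$, so their product satisfies $0\le\omega_G(\mathbf{x})\le 1$. Inserting this into the norm equality gives the stated bound.

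For (iii), I take $\mathbf{x}\in\Omega_G\setminus\Omega_{\delta}(G)$. By hypothesis $\operatorname{supp}(\omega_G)\subseteq\Omega_{\delta}(G)$, so $\mathbf{x}$ lies outside the support, and hence outside the nonzero set of $\omega_G$, which is contained in the support. Therefore $\omega_G(\mathbf{x})=0$, and (i) applies.

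For (iv), I take $\mathbf{x}\in K\cap(\Omega_G\setminus\Omega_{\delta}(G))$. At such a point both hypotheses hold: $0\le\omega_G(\mathbf{x})\le\varepsilon$ because $\mathbf{x}$ is outside the band, and $\|\Delta\mathbf{z}_{\ell}(\mathbf{x})\|\le D_{\Delta}$ because $\mathbf{x}\in K$. Multiplying these two bounds in the norm equality gives $\varepsilon D_{\Delta}$.

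The only step that is not pure bookkeeping is making sure $\Delta\mathbf{z}_{\ell}(\mathbf{x})$ is a well-defined finite vector. Near the medial axis or the trailing edge the frame $\mathbf{B}_G$ may be undefined. I would handle this with the convention already stated in the notation: $c(\mathbf{x})$ vanishes near these singular sets, and there the product $\omega_G\Delta\mathbf{z}_{\ell}$ is interpreted as zero. Away from them, $\mathbf{B}_G$ is orthogonal and every branch is a finite composition of linear maps, MLPs and sigmoids, so the residual is finite. With this convention, all four items hold pointwise on $\Omega_G$ with no further regularity assumptions.
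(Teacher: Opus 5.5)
Your proof is correct and follows essentially the same route as the paper. Both rewrite the write-back rule as $\tilde{\mathbf{z}}_{\ell}(\mathbf{x})-\mathbf{z}_{\ell}(\mathbf{x})=\omega_G(\mathbf{x})\,\Delta\mathbf{z}_{\ell}(\mathbf{x})$, then use $0\le\omega_G\le 1$ (respectively $\le\varepsilon$) with absolute homogeneity of the norm, and get (iii) from (i) because $\omega_G$ vanishes off its support. Your closing well-definedness remark goes beyond what the paper's proof needs, since the statement already presupposes that $\Delta\mathbf{z}_{\ell}(\mathbf{x})$ is defined wherever it is evaluated, but it is harmless.
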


\begin{proof}
By \eqref{eq:writeback},
$\tilde{\mathbf{z}}_{\ell}(\mathbf{x})-\mathbf{z}_{\ell}(\mathbf{x})
=\omega_G(\mathbf{x})\,\Delta\mathbf{z}_{\ell}(\mathbf{x})$.
Statement~(i) is immediate. For~(ii), $w,c\in[0,1]$ gives
$0\le\omega_G\le 1$, and absolute homogeneity of the norm yields
$\|\omega_G(\mathbf{x})\Delta\mathbf{z}_{\ell}(\mathbf{x})\|
=\omega_G(\mathbf{x})\|\Delta\mathbf{z}_{\ell}(\mathbf{x})\|
\le\|\Delta\mathbf{z}_{\ell}(\mathbf{x})\|$. Statement~(iii) is the
contrapositive of~(i) under the support assumption. For~(iv),
$\omega_G(\mathbf{x})\le\varepsilon$ and the same homogeneity argument
give the bound $\varepsilon D_{\Delta}$.
\end{proof}

We emphasize that Proposition~\ref{prop:locality} is an
identity-by-construction property of the write-back rule, not a
non-trivial localization result for the final prediction; it states
that GeoABC does not directly modify the intermediate representation
at points where $\omega_G$ vanishes (or is small), and that for soft
windows the residual write-back outside the near-wall band is
controlled by the gate's far-band magnitude $\varepsilon$. The
behavior of the final output at bulk points is discussed separately
in Sec.~\ref{sec:bulk-remark}.

\subsection{Spatial Regularity of the Corrected Representation}

The next property records that, under standard regularity assumptions
on a compact set $K$ avoiding the geometric singularities discussed
above, GeoABC's write-back preserves the Lipschitz behavior of the
intermediate representation, with a quantitative bound. We state the
result as a regularity bound on $\tilde{\mathbf{z}}_{\ell}$ as a
function of the spatial query $\mathbf{x}$, and not as a stability
guarantee for the operator $\tilde{\mathcal{F}}_{\theta,\psi}$ with
respect to perturbations of its functional inputs.

\paragraph{Setting.}
All regularity statements in this subsection are made for a fixed
geometry $G$, a fixed operating condition $s$, a fixed
discretization or query set, and fixed model parameters
$(\theta,\psi)$. We suppress these dependencies and view the hidden
representation $\mathbf{z}_{\ell}$, the residual
$\Delta\mathbf{z}_{\ell}$, and the gate $\omega_G$ as functions of
the spatial query $\mathbf{x}\in\Omega_G$ only. The Lipschitz
constants below should therefore be read as spatial-regularity
constants of these functions, not as stability constants of the
underlying operator with respect to its inputs.

\begin{theorem}[Lipschitz bound on the corrected representation]
\label{thm:lipschitz}
Let $K\subset\Omega_G$ be compact and disjoint from a neighborhood of
the medial axis of $\Omega_G$ and from non-smooth points of
$\partial\Omega_G$. Assume there exist constants
$L_{\mathbf{z}_{\ell}},L_{\omega},L_{\Delta\mathbf{z}},D_{\Delta}\ge 0$
such that
\begin{align}
    \|\mathbf{z}_{\ell}(\mathbf{x})-\mathbf{z}_{\ell}(\mathbf{y})\|
    &\le L_{\mathbf{z}_{\ell}}\|\mathbf{x}-\mathbf{y}\|, \\
    |\omega_G(\mathbf{x})-\omega_G(\mathbf{y})|
    &\le L_{\omega}\|\mathbf{x}-\mathbf{y}\|, \\
    \|\Delta\mathbf{z}_{\ell}(\mathbf{x})
      -\Delta\mathbf{z}_{\ell}(\mathbf{y})\|
    &\le L_{\Delta\mathbf{z}}\|\mathbf{x}-\mathbf{y}\|, \\
    \|\Delta\mathbf{z}_{\ell}(\mathbf{x})\| &\le D_{\Delta},
\end{align}
for all $\mathbf{x},\mathbf{y}\in K$. Then
$\tilde{\mathbf{z}}_{\ell}$ is Lipschitz on $K$ with
\begin{equation}
    \mathrm{Lip}_K(\tilde{\mathbf{z}}_{\ell})
    \le L_{\mathbf{z}_{\ell}}
        + L_{\omega}D_{\Delta}
        + L_{\Delta\mathbf{z}}.
    \label{eq:lip-z}
\end{equation}
If, in addition, the post-injection map (the remaining backbone layers
and decoder after layer $\ell$) restricted to the relevant range is
$L_{\mathrm{post}}$-Lipschitz, then on $K$ the predicted field
$\tilde{\mathcal{F}}_{\theta,\psi}=P_{\theta}\circ
\tilde{\mathbf{z}}_{\ell}$ satisfies
\begin{equation}
    \mathrm{Lip}_K(\tilde{\mathcal{F}}_{\theta,\psi})
    \le L_{\mathrm{post}}\!\left(
        L_{\mathbf{z}_{\ell}}
        + L_{\omega}D_{\Delta}
        + L_{\Delta\mathbf{z}}
    \right).
    \label{eq:lip-F}
\end{equation}
\end{theorem}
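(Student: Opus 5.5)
The bound follows from the triangle inequality applied to the write-back rule \eqref{eq:writeback}. For $\mathbf{x},\mathbf{y}\in K$ we write
\begin{equation*}
\tilde{\mathbf{z}}_{\ell}(\mathbf{x})-\tilde{\mathbf{z}}_{\ell}(\mathbf{y})
=\big(\mathbf{z}_{\ell}(\mathbf{x})-\mathbf{z}_{\ell}(\mathbf{y})\big)
+\big(\omega_G(\mathbf{x})\Delta\mathbf{z}_{\ell}(\mathbf{x})-\omega_G(\mathbf{y})\Delta\mathbf{z}_{\ell}(\mathbf{y})\big).
\end{equation*}
The first bracket is bounded by $L_{\mathbf{z}_{\ell}}\|\mathbf{x}-\mathbf{y}\|$ directly from the first hypothesis.

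For the gated term we use the standard product-rule splitting, adding and subtracting $\omega_G(\mathbf{y})\Delta\mathbf{z}_{\ell}(\mathbf{x})$:
\begin{equation*}
\big(\omega_G(\mathbf{x})-\omega_G(\mathbf{y})\big)\Delta\mathbf{z}_{\ell}(\mathbf{x})
+\omega_G(\mathbf{y})\big(\Delta\mathbf{z}_{\ell}(\mathbf{x})-\Delta\mathbf{z}_{\ell}(\mathbf{y})\big).
\end{equation*}
The first piece has norm at most $L_{\omega}D_{\Delta}\|\mathbf{x}-\mathbf{y}\|$, using the Lipschitz hypothesis on $\omega_G$ and the uniform bound $\|\Delta\mathbf{z}_{\ell}\|\le D_{\Delta}$ on $K$. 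For the second piece we use $0\le\omega_G\le 1$, which follows from $w,c\in[0,1]$ exactly as in Proposition~\ref{prop:locality}(ii); it then has norm at most $L_{\Delta\mathbf{z}}\|\mathbf{x}-\mathbf{y}\|$. Summing the three contributions gives \eqref{eq:lip-z}.

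The only point needing care is which factor to attach the sup bound to. The bound $\omega_G\le 1$ is what removes any multiplicative constant in front of $L_{\Delta\mathbf{z}}$. Splitting the other way, with $\omega_G(\mathbf{x})$ multiplying the difference of residuals, works equally well, since the bound on $\omega_G$ holds pointwise everywhere. The compactness of $K$ and its separation from the medial axis and the non-smooth boundary points are not used in the algebra. They only ensure that the assumed constants are finite, because the frame $\mathbf{B}_G$, and hence $\Delta\mathbf{z}_{\ell}$, is smooth there. No further argument is needed because these constants are assumed.

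For \eqref{eq:lip-F} we compose. By hypothesis, $P_{\theta}$ is $L_{\mathrm{post}}$-Lipschitz on a set containing $\tilde{\mathbf{z}}_{\ell}(K)$. Hence $\|P_{\theta}(\tilde{\mathbf{z}}_{\ell}(\mathbf{x}))-P_{\theta}(\tilde{\mathbf{z}}_{\ell}(\mathbf{y}))\|\le L_{\mathrm{post}}\|\tilde{\mathbf{z}}_{\ell}(\mathbf{x})-\tilde{\mathbf{z}}_{\ell}(\mathbf{y})\|$, and inserting \eqref{eq:lip-z} finishes the proof.

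One caveat: if the post-injection layers mix information across query points, as attention or spectral layers do, then "pointwise composition" must be read in the fixed-discretization sense set out in the Setting paragraph. We simply adopt that reading, as the statement does.
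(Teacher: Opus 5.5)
Your proof is correct and follows the paper's argument essentially line for line: the same add-and-subtract of $\omega_G(\mathbf{y})\,\Delta\mathbf{z}_{\ell}(\mathbf{x})$, the bound $0\le\omega_G\le 1$ so that no constant multiplies $L_{\Delta\mathbf{z}}$, and the Lipschitz composition rule for the $L_{\mathrm{post}}$ bound. Your side remarks are also accurate: the symmetric splitting works equally well, and compactness and the exclusion of the singular sets play no role in the algebra.
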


\begin{proof}
Define $\delta\mathbf{z}_{\ell}(\mathbf{x})
=\omega_G(\mathbf{x})\,\Delta\mathbf{z}_{\ell}(\mathbf{x})$. For
$\mathbf{x},\mathbf{y}\in K$, adding and subtracting
$\omega_G(\mathbf{y})\,\Delta\mathbf{z}_{\ell}(\mathbf{x})$ and
applying the triangle inequality gives
\begin{align}
    \|\delta\mathbf{z}_{\ell}(\mathbf{x})
      -\delta\mathbf{z}_{\ell}(\mathbf{y})\|
    &\le |\omega_G(\mathbf{x})-\omega_G(\mathbf{y})|\,
         \|\Delta\mathbf{z}_{\ell}(\mathbf{x})\| \nonumber\\
    &\quad + |\omega_G(\mathbf{y})|\,
         \|\Delta\mathbf{z}_{\ell}(\mathbf{x})
           -\Delta\mathbf{z}_{\ell}(\mathbf{y})\|.
\end{align}
Using $|\omega_G(\mathbf{y})|\le 1$ and the four assumed bounds yields
$\|\delta\mathbf{z}_{\ell}(\mathbf{x})
 -\delta\mathbf{z}_{\ell}(\mathbf{y})\|
\le (L_{\omega}D_{\Delta}+L_{\Delta\mathbf{z}})\|\mathbf{x}-\mathbf{y}\|$.
Combining with the Lipschitz bound for $\mathbf{z}_{\ell}$ via
$\tilde{\mathbf{z}}_{\ell}=\mathbf{z}_{\ell}+\delta\mathbf{z}_{\ell}$
gives \eqref{eq:lip-z}. The bound \eqref{eq:lip-F} follows from the
composition rule for Lipschitz maps.
\end{proof}

\paragraph{Scope and caveats.}
Theorem~\ref{thm:lipschitz} is a regularity bound on the
\emph{predicted field as a function of the spatial coordinate}
$\mathbf{x}$, and should not be read as a stability statement for the
operator with respect to its functional inputs. Two limitations are
worth making explicit. First, the assumptions on
$\hat{\mathbf{t}},\hat{\mathbf{n}}$ (and hence on $\Delta\mathbf{z}_{\ell}$
through $\mathbf{B}_G$) generally fail near the medial axis and near
non-smooth points of $\partial\Omega_G$; the compact set $K$ is chosen
to exclude a neighborhood of these singular sets, which is consistent
with the smooth roll-off of $c(\mathbf{x})$ used in practice. Second,
$L_{\mathrm{post}}$ collects the Lipschitz behavior of the remaining
operator layers, including spectral and attention layers; tight
quantitative control of $L_{\mathrm{post}}$ is well known to be
difficult for transformer-based backbones and is not pursued here.
Theorem~\ref{thm:lipschitz} therefore plays the role of a structural
sanity check rather than a quantitative robustness certificate.

\subsection{Identity Reachability}

The third property records that GeoABC contains the original backbone
as a special case at the parameter level. This implies that the
function class realized by the GeoABC-augmented model is at least as
expressive as that of the original backbone.

\begin{proposition}[Identity reachability]
\label{prop:identity}
Let $\mathcal{F}$ denote the function class realized by the original
backbone neural operator and let $\widetilde{\mathcal{F}}$ denote the
function class realized after inserting GeoABC. Then
$\mathcal{F}\subseteq\widetilde{\mathcal{F}}$.
Consequently, any approximation property of $\mathcal{F}$ on a
function class of interest (e.g., density in a target space, or a
universal approximation property in the operator-learning sense of the
backbone's analysis) is inherited by $\widetilde{\mathcal{F}}$.
\end{proposition}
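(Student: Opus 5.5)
The plan is to exhibit, for every backbone parameter $\theta$, a choice of GeoABC parameters $\psi_0$ for which the augmented model computes exactly the same map as the original backbone. This gives $\mathcal{F}\subseteq\widetilde{\mathcal{F}}$ directly.

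We use the write-back rule \eqref{eq:writeback}. It suffices to make the latent residual vanish identically, $\Delta\mathbf{z}_{\ell}\equiv 0$. We do not try to force the gate $\omega_G$ to zero: $\omega_G=w\,c$ is fixed by the geometry cache and is not a learnable quantity, so it cannot be used to switch the module off. Instead we set the two output linear maps to zero, $\mathbf{W}_{\uparrow}=0$ and $\mathbf{W}_a=0$. All other module parameters ($\mathbf{W}_{\downarrow}$, $\mathbf{W}_h$, $f_t,f_n,f_m$, $\mathbf{W}_m$, $\Gamma_\psi$) may be arbitrary.

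With this choice,
\[
\Delta\mathbf{z}_{\ell}(\mathbf{x})=\mathbf{W}_{\uparrow}\mathbf{B}_G(\mathbf{x})\mathbf{c}_B(\mathbf{x})+\mathbf{W}_a\mathbf{h}(\mathbf{x})=0
\]
for every query point $\mathbf{x}$. This holds even where $\mathbf{c}_B$ is large, and even near the medial axis or at non-smooth boundary points where the frame is poorly behaved. The only requirement is that every quantity takes a finite value at each query point, which holds because the geometry cache stores finite descriptors. By part (i) of Proposition~\ref{prop:locality}, or simply by \eqref{eq:writeback}, we then have $\tilde{\mathbf{z}}_{\ell}=\mathbf{z}_{\ell}$ pointwise.

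Next we check that the remaining layers see the same input. The layers before $\ell$ are untouched by the module. The layers after $\ell$, together with the decoder, are the backbone's own layers with the same $\theta$. Because GeoABC is inserted as a residual on $\mathbf{z}_{\ell}$ and does not reparametrize the backbone, the composition with identical input reproduces $\mathcal{F}_\theta(\mathbf{x},G,\mathbf{s})$ for every $(\mathbf{x},G,\mathbf{s})$. Hence $\mathcal{F}_\theta=\tilde{\mathcal{F}}_{\theta,\psi_0}\in\widetilde{\mathcal{F}}$, which proves the inclusion.

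The consequence follows because approximation properties are monotone under enlarging the class. If every target in the space of interest is within $\varepsilon$ of some $\mathcal{F}_\theta$ in the relevant norm, the same element $\tilde{\mathcal{F}}_{\theta,\psi_0}$ achieves the same bound. So density and universal-approximation statements for $\mathcal{F}$ transfer verbatim to $\widetilde{\mathcal{F}}$.

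The only real obstacle is definitional: fixing precisely what ``function class realized'' means. I would define it as the image of the full parameter space, with no architectural restriction such as forcing $\mathbf{W}_{\uparrow}\neq 0$ or fixing an initialization, and with the same backbone parameter space on both sides. I would state explicitly that the geometry cache is an input-dependent but parameter-free preprocessing step. That ensures it adds no constraint on the backbone.
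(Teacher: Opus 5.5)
Your proposal is correct and follows the paper's proof exactly: setting $\mathbf{W}_{\uparrow}=\mathbf{0}$ and $\mathbf{W}_a=\mathbf{0}$ makes $\Delta\mathbf{z}_{\ell}\equiv\mathbf{0}$. Hence $\tilde{\mathbf{z}}_{\ell}=\mathbf{z}_{\ell}$, and the untouched remaining layers reproduce the backbone, which gives the inclusion and the inheritance of approximation properties. Your extra remarks are sensible clarifications that the paper leaves implicit: that $\omega_G$ is not learnable, that the cached descriptors must be finite, and that ``function class realized'' needs a precise definition.
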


\begin{proof}
Choose $\mathbf{W}_{\uparrow}=\mathbf{0}$ and
$\mathbf{W}_a=\mathbf{0}$. Then
$\Delta\mathbf{z}_{\ell}\equiv\mathbf{0}$ pointwise, and
\eqref{eq:writeback} gives
$\tilde{\mathbf{z}}_{\ell}=\mathbf{z}_{\ell}$. The remaining
backbone layers and decoder are untouched, so the GeoABC-augmented
model coincides with the original backbone at this parameter setting.
Hence $\mathcal{F}\subseteq\widetilde{\mathcal{F}}$, and any
approximation property of $\mathcal{F}$ transfers to
$\widetilde{\mathcal{F}}$.
\end{proof}

Proposition~\ref{prop:identity} is an identity-reachability statement
by construction; it records that GeoABC does not reduce the expressive
power of the backbone, but it does not by itself imply that GeoABC
strictly extends it. The empirical improvements reported in
Sec.~\ref{sec:experiments} indicate that, for the backbones and
benchmarks considered, the additional capacity introduced by GeoABC
is in fact used by training to reduce near-wall error.

\subsection{Remark on Bulk Regions}
\label{sec:bulk-remark}

Proposition~\ref{prop:locality} states that GeoABC does not directly
modify the intermediate representation at points where $\omega_G$
vanishes (or is small) at the injection layer. It does \emph{not}
state that the final prediction is unchanged at bulk points. Many
neural operators use nonlocal post-injection layers, such as Fourier,
spectral, or attention layers, which can propagate near-wall
corrections to other locations. The precise statement is therefore:
GeoABC preserves local identity at points where $\omega_G$ vanishes,
and near-identity (with write-back magnitude bounded by
$\varepsilon\,D_{\Delta}$) at points where $\omega_G\le\varepsilon$,
\emph{at the injection layer}; the perturbation of the final
prediction at bulk points is controlled by the Lipschitz bound in
Theorem~\ref{thm:lipschitz} and by the nonlocal structure of the
post-injection map. This is consistent with the spatial-attribution
analysis in Appendix~\ref{sec:appendix:spatial_attribution}, where part of the
gain is concentrated in the near-wall band and part propagates into
the surrounding region through subsequent operator layers.

\section{Dataset Details}
\label{sec:appendix:datasets}
In this section, we provide additional details on the benchmark datasets used in our experiments, including their sources, prediction targets, train/test splits, and preprocessing procedures. As illustrated in \cref{fig:dataset_tasks}, the benchmarks cover two representative aerodynamic design scenarios. 
The 3D ShapeNetCar task models external flow around vehicle geometries, where the predicted pressure and velocity fields are used to evaluate drag-related aerodynamic performance. 
The 2D airfoil task models flow around airfoil geometries, where accurate near-wall prediction is important for estimating surface pressure and lift-related quantities.

\begin{figure}[b]
	\centering
\includegraphics[width=1.0\linewidth]{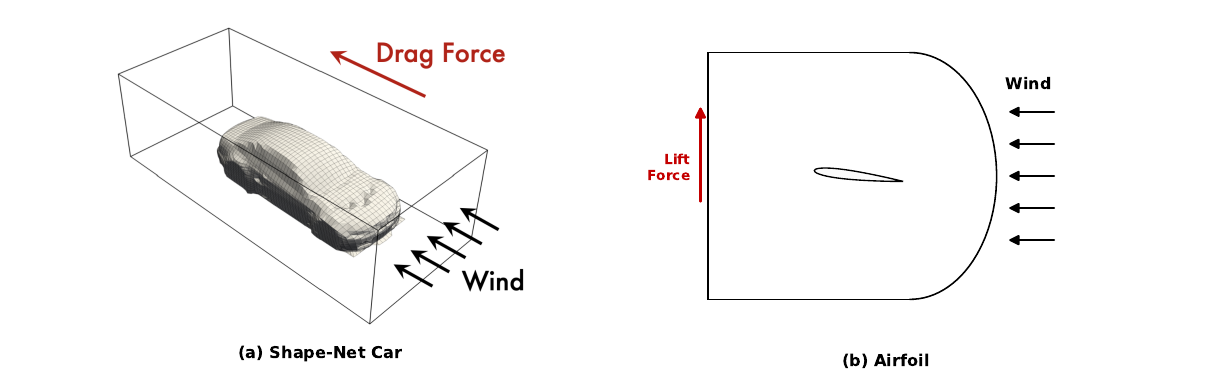}
	\caption{Car and airfoil design tasks.}
	\label{fig:dataset_tasks}
\end{figure}

\subsection{2D airfoil benchmark.}
\paragraph{Governing PDE.}
The 2D benchmark considers steady-state inviscid compressible flow around airfoil cross-sections in the transonic regime. The underlying numerical solver computes steady solutions of the compressible Euler equations,
\begin{equation*}
    \nabla \!\cdot\! \bigl(\rho \mathbf{v}\bigr) = 0,
    \qquad
    \nabla \!\cdot\! \bigl(\rho \mathbf{v}\otimes \mathbf{v} + p\mathbf{I}\bigr) = \mathbf{0},
    \qquad
    \nabla \!\cdot\! \bigl((\mathcal{E}+p)\mathbf{v}\bigr) = 0,
\end{equation*}
where $\rho$ denotes the density, $\mathbf{v}=(u,v)$ is the two-dimensional velocity vector, $p$ is the pressure, and $\mathcal{E}$ is the total energy density. For an ideal gas, the system is closed by
\begin{equation*}
    \mathcal{E}
    =
    \frac{p}{\gamma-1}
    +
    \frac{1}{2}\rho \|\mathbf{v}\|_2^2 ,
\end{equation*}
where $\gamma$ is the ratio of specific heats. Equivalently, if $E=e+\frac{1}{2}\|\mathbf{v}\|_2^2$ denotes the specific total energy, the energy equation can be written as
$\nabla\!\cdot\!\bigl((\rho E+p)\mathbf{v}\bigr)=0$.
The far-field boundary prescribes the freestream Mach number and angle of attack, while the airfoil surface satisfies an inviscid no-penetration slip-wall condition.

\paragraph{Source and generation.}
The dataset follows the airfoil benchmark released by \citep{li2023geofno}. Each sample corresponds to a deformed NACA-0012 airfoil geometry. The baseline airfoil is embedded into a design-parameter space and deformed by perturbing control nodes on the upper and lower surfaces, yielding a family of airfoil cross-sections with different geometric profiles. For each geometry, a body-fitted structured Euler solver is used to compute the corresponding steady transonic flow field. Since viscosity is neglected, this benchmark focuses on the inviscid aerodynamic response induced by geometry-dependent compressibility effects, including strong near-wall pressure variation and possible shock structures in the transonic regime.

\paragraph{Mesh and channels.}
Each sample is stored in a structured cylindrical view of resolution $221 \times 51$, corresponding to $11{,}271$ grid nodes per sample. The two coordinate arrays provide the physical mesh locations $(x,y)$, and the raw flow array contains five channels,
\[
    [\rho, u, v, p, \mathrm{Mach}] .
\]
The original Geo-FNO airfoil setting uses the physical coordinates as input and predicts the Mach number field. In our main 2D setting, we use the same airfoil geometries and mesh representation but predict the three-channel target $(u,v,p)$, so that both velocity and pressure errors are evaluated. Per-node inputs consist of the physical coordinates $(x,y)\in\mathbb{R}^2$, augmented with the offline geometry descriptors
\[
    G(x)
    =
    \bigl(
    \phi,\,
    \mathrm{window},\,
    \mathrm{conf},\,
    \mathbf{t},\,
    \mathbf{n},\,
    \kappa
    \bigr),
\]
where $\phi$ denotes the signed-distance value, $\mathrm{window}$ is the boundary-localization weight, $\mathrm{conf}$ is the confidence score of the local boundary association, $\mathbf{t}$ and $\mathbf{n}$ are the local tangent and normal directions, and $\kappa$ is the curvature descriptor.

For boundary-specific evaluation, we distinguish the physical airfoil wall from the artificial wake cut induced by the structured grid representation. In our implementation, the airfoil body surface is identified as the $121$-point arc on the inner $j{=}0$ row, corresponding to the Python slice $[50{:}171]$, i.e., column indices $50,\ldots,170$. The remaining nodes on the same inner row belong to the wake-cut portion of the grid and are therefore excluded from wall-based metrics and boundary-surface evaluations.

\paragraph{Split.}
We use the standard upstream split with $1000$ training samples and $200$ test samples.

\paragraph{Visualization.}
\Cref{fig:airfoil_tasks} visualizes representative airfoil geometries from the 2D benchmark.
The samples exhibit substantial geometric diversity in both thickness and camber, ranging from thin and nearly symmetric profiles to thicker and more strongly cambered shapes.
Here, $t$ denotes the normalized maximum thickness of each airfoil profile, and $c$ denotes the normalized maximum camber.
These geometric variations change the local curvature, leading-edge shape, trailing-edge behavior, and the spatial distribution of wall-normal and tangential directions along the airfoil surface.
As a result, the corresponding flow fields are not only geometry-dependent at the global level, but also highly sensitive to local near-wall structures.
This makes the benchmark suitable for evaluating whether a neural operator can adapt to different airfoil geometries and whether the proposed geometry-conditioned correction can improve predictions in regions where aerodynamic quantities are strongly affected by boundary shape.
\begin{figure}[htbp]
	\centering
\includegraphics[width=1.0\linewidth]{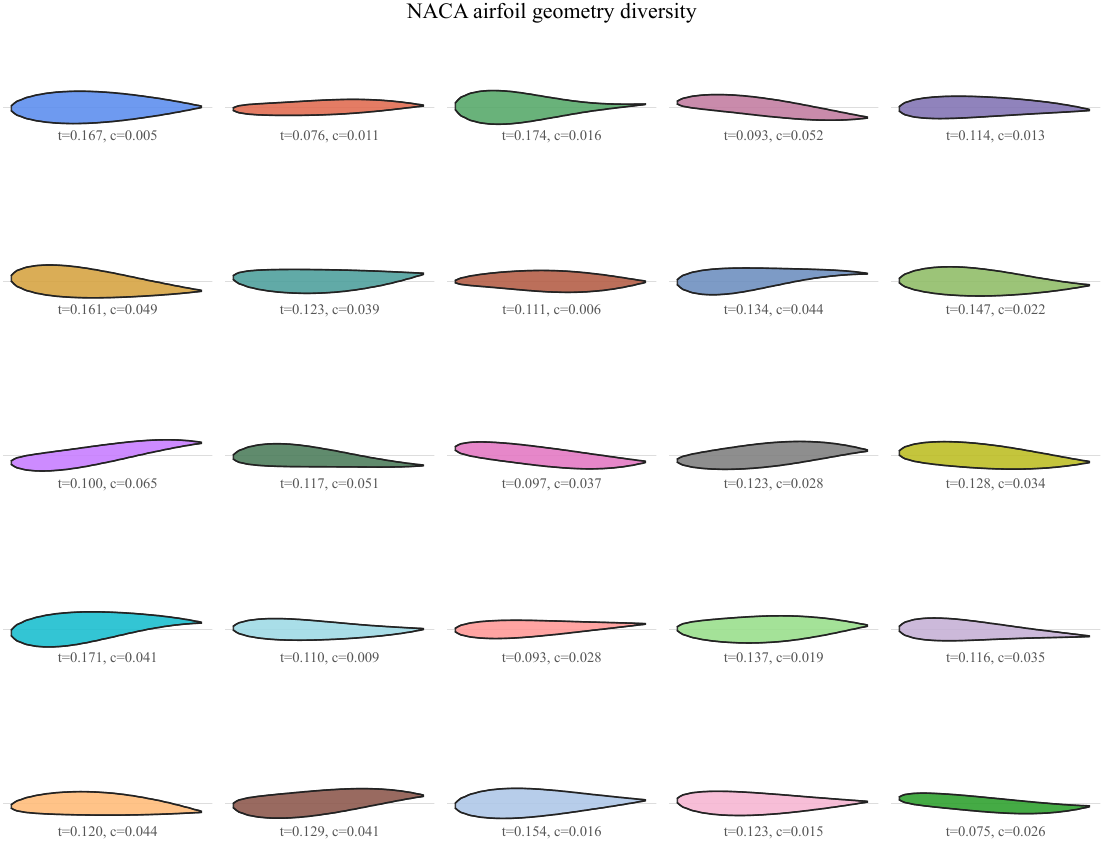}
	\caption{Visualization of 2D airfoil benchmark. Here, $t$ denotes the normalized maximum thickness, and $c$ denotes the normalized maximum camber of each airfoil profile.
}
	\label{fig:airfoil_tasks}
\end{figure}
    
    \subsection{ShapeNetCar 3D benchmark}
    \label{sec:appendix:dataset:car}
    
    \paragraph{Governing PDE.}
    The 3D benchmark targets steady external aerodynamic flow around generic vehicle bodies. 
    Following the ShapeNetCar setting used in Transolver, the underlying simulations correspond to incompressible Reynolds-averaged Navier--Stokes (RANS) flow around cars under a fixed driving condition of approximately $72$ km/h. 
    In conservative form, the incompressible mean-flow equations can be written as
    \begin{equation*}
        \nabla \!\cdot\! \mathbf{u} = 0,
        \qquad
        \rho(\mathbf{u}\!\cdot\!\nabla)\mathbf{u}
        =
        -\nabla p
        +
        \nabla\!\cdot\!\boldsymbol{\tau}_{\mathrm{eff}},
    \end{equation*}
    where $\mathbf{u}=(u,v,w)$ is the mean velocity field, $p$ is the pressure, $\rho$ is the fluid density, and $\boldsymbol{\tau}_{\mathrm{eff}}$ denotes the effective viscous stress including the modeled turbulent contribution. 
    This notation avoids committing to a specific turbulence closure, which is not exposed as part of the learning benchmark. 
    The far-field/inlet condition is fixed across the dataset, while a no-slip wall condition is imposed on the wetted vehicle surface. 
    The main engineering quantity of interest is the drag coefficient $C_D$, which is computed from the predicted aerodynamic fields, with the surface pressure contribution evaluated on the vehicle surface.
    
    \paragraph{Source and generation.}
    We follow the ShapeNetCar benchmark setting used by \citep{wu2024transolver}. 
    The original car geometries come from the ShapeNet ``car'' category and the CFD data are based on the ShapeNetCar dataset introduced by \citep{umetani2018learning}. 
    The benchmark contains $889$ vehicle geometries simulated under the same driving-speed condition. 
    Following the preprocessing protocol adopted in Transolver and related 3D geometry-conditioned surrogate models, each case is represented as an unstructured geometry with both surrounding-flow samples and wetted-surface samples. 
The task is to predict the surrounding velocity field together with the surface pressure field from the vehicle geometry.

\paragraph{Mesh and channels.}
Each sample is represented on an unstructured hybrid point set with $32{,}186$ mesh points. 
The point set combines volume points in the surrounding flow region and surface points on the wetted vehicle body. 
For each point, the base geometric input consists of the 3D position, signed distance to the vehicle surface, and an associated normal direction,
\[
    x_{\mathrm{in}}
    =
    (x,y,z,\phi,n_x,n_y,n_z)
    \in \mathbb{R}^{7}.
\]
Here, $\phi$ denotes the signed-distance value and $(n_x,n_y,n_z)$ denotes the local unit normal direction associated with the nearest or corresponding surface geometry. 
In our model, these base geometric inputs are further augmented with the offline geometry-cache descriptors used throughout this work,
\[
    G(x)
    =
    \bigl(
    \phi,\,
    \mathrm{window},\,
    \mathrm{conf},\,
    \mathbf{t}_1,\,
    \mathbf{t}_2,\,
    \mathbf{n},\,
    \kappa
    \bigr),
\]
where $\mathrm{window}$ is the boundary-localization weight, $\mathrm{conf}$ is the confidence score of the local boundary association, $\mathbf{t}_1$ and $\mathbf{t}_2$ are local tangential directions, $\mathbf{n}$ is the outward normal direction, and $\kappa$ is the local curvature descriptor. 
A binary surface mask is used internally to separate volume and surface nodes for loss computation and evaluation, but it is not counted as part of the standard geometric input channel.

The prediction target contains the three velocity components on the surrounding flow points and pressure on the wetted-surface points. 
Following the benchmark convention, velocity errors are evaluated on the volume point set, while pressure-based metrics and drag-related evaluations are computed on the surface point set. 
When computing surface-pressure metrics or the pressure contribution to $C_D$, we restrict the evaluation to the wetted vehicle surface and exclude non-surface volume samples.

\paragraph{Split.}
We use $789$ training cars and $100$ test cars.

\paragraph{Visualization.}
\Cref{fig:car_tasks} visualizes representative vehicle geometries from the ShapeNetCar 3D benchmark. 
Compared with the 2D airfoil benchmark, this task contains substantially more complex three-dimensional geometry variations, including different body lengths, roof profiles, front and rear shapes, hood slopes, and local surface curvature patterns. 
Each vehicle geometry is represented by an unstructured surface/volume point set, and the visualization highlights the diversity of wetted vehicle surfaces on which pressure-based quantities and drag-related metrics are evaluated. 
Here, $C_D$ denotes the drag coefficient of each car under the prescribed flow condition. 
The variation of $C_D$ across geometries indicates that changes in vehicle shape lead to different aerodynamic responses, especially through the pressure distribution and surface-normal force contribution on the car body. 
These geometry-dependent aerodynamic differences make the benchmark suitable for evaluating whether a surrogate model can handle complex 3D solid boundaries and whether GeoABC can improve boundary-sensitive predictions on wetted surfaces.

\begin{figure}[htbp]
	\centering
\includegraphics[width=1.0\linewidth]{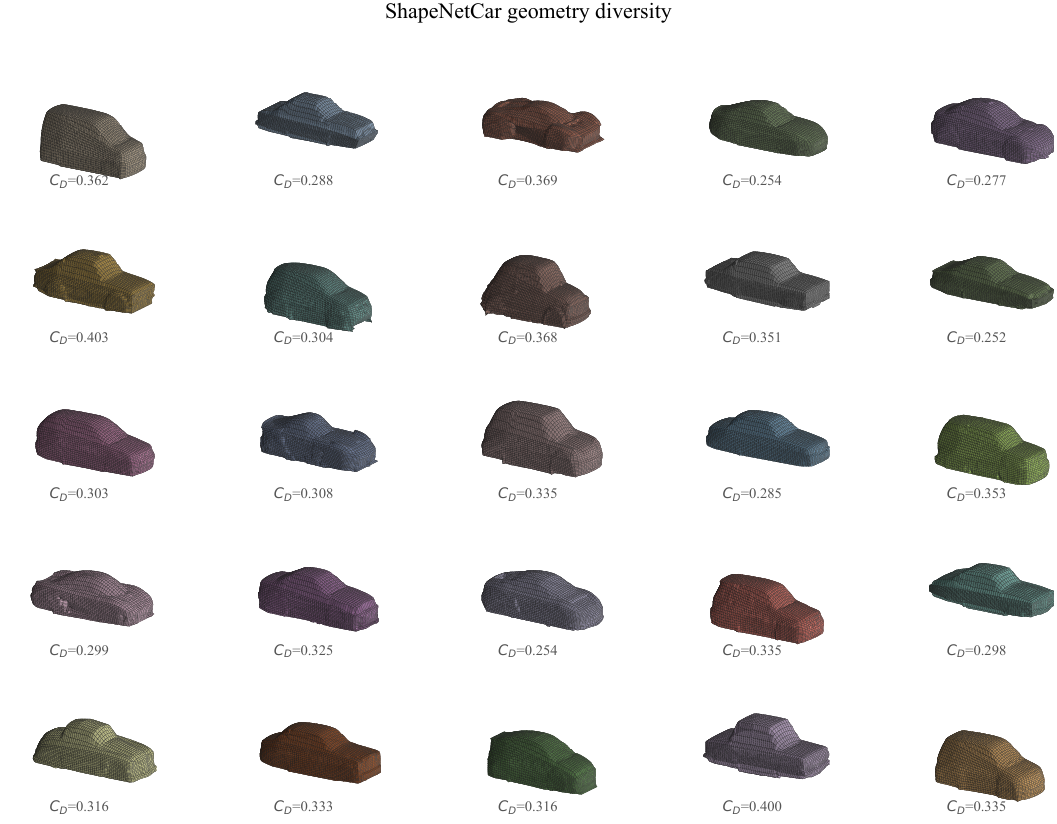}
	\caption{Visualization of 3D car benchmark. Here, $C_D$ denotes the drag coefficient of each car geometry under the prescribed flow condition.
}
	\label{fig:car_tasks}
\end{figure}
\FloatBarrier

\section{Additional Experiments}
\label{sec:appendix:additional_exp}
\subsection{Additional Diagnostics on Boundary-Aware Remedies}
\label{sec:appendix:boundary_awareness}
\begin{table}[htbp]
\centering
\caption{Comparison of standard boundary-aware remedies on the airfoil pressure-only validation set.}
\label{tab:motivation}
\resizebox{\linewidth}{!}{
\begin{tabular}{lcccc}
\toprule
Backbone & Global rel-$L_2$ $\downarrow$ & Near rel-$L_2$ $\downarrow$ & $C_p$ MAE $\downarrow$ & wall-$p$ MAE $\downarrow$\\
\midrule
\textbf{Geo-FNO} 
& $0.0717$
& $0.1780$
& $0.3055$
& $0.1369$ \\
\quad $+$ boundary loss reweighting
& $0.0721$
& $0.1772$
& $0.2922$
& $0.1309$\\
\quad $+$ scalar descriptor concatenation
& $0.0719$
& $0.1777$
& $0.3052$ 
& $0.1367$\\
\midrule
\textbf{GNOT}  
& $0.0271$           
& $0.0712$           
& $0.1106$          
& $0.0496$ \\
\quad $+$ cross-attention branch     
& $0.0438$ 
& $0.1058$
& $0.1512$
& $0.0678$\\
\bottomrule
\end{tabular}
}
\end{table}

\paragraph{Purpose.}
Before introducing GeoABC, we conduct a diagnostic study to test whether the near-wall accuracy gap can be closed by standard boundary-aware remedies.
The goal of this study is not to exhaust all possible boundary-enhanced architectures, but to isolate a common design assumption in existing remedies: the boundary is treated primarily as a proximity-defined region that should receive more input cues, larger training weights, or additional model capacity.
Such designs can improve the model's awareness of where the boundary is located, but they do not explicitly encode the local tangential--normal structure that organizes near-wall physical variation.

\paragraph{Diagnostic setting.}
We perform the diagnostic experiments on the NACA airfoil benchmark.
To make the boundary effect easier to interpret, we use the pressure-only prediction setting, where the output is the pressure field and the boundary-sensitive metrics can be directly evaluated through the pressure distribution on the airfoil surface.
We report global relative $L_2$, near-wall relative $L_2$, surface pressure-coefficient error $C_p$ MAE, and wall-pressure MAE.
The near-wall metric is computed inside the same narrow band around the solid boundary used throughout the paper.
All compared variants use the same data split, backbone configuration, optimizer setting, and training schedule as their corresponding baseline, so that the comparison focuses on how boundary information is incorporated.

\paragraph{Boundary-aware variants.}
We compare three representative boundary-aware remedies.

First, \emph{boundary loss reweighting} keeps the backbone architecture unchanged and increases the loss weight of samples inside the near-wall band.
This variant tests whether the near-wall gap can be reduced simply by assigning higher training priority to boundary-region errors.

Second, \emph{scalar descriptor concatenation} augments the model input with scalar geometric descriptors, including the signed distance to the boundary, the near-wall window value, and local curvature.
This variant tests whether explicitly exposing boundary proximity and local shape information to the model is sufficient.

Third, \emph{cross-attention boundary branch} introduces an additional boundary-specific branch that processes the same scalar descriptors and fuses the resulting boundary features back into the backbone representation through cross-attention.
This variant gives the model extra boundary-specific capacity, but the injected boundary information remains isotropic: it indicates which points are close to the wall, but does not define a local tangent--normal frame or direction-aware correction paths.

These three variants cover the common ways of improving boundary awareness: reweighting boundary errors, concatenating boundary descriptors, and adding boundary-specific capacity.
However, none of them explicitly models the directional structure of near-wall physics.
They therefore test whether boundary localization alone is sufficient, before introducing the anisotropic design of GeoABC.

\paragraph{Results.}
The results are summarized in \Cref{tab:motivation}.
On Geo-FNO, boundary loss reweighting and scalar descriptor concatenation barely change near-wall relative $L_2$.
The baseline near-wall error is $0.1780$, while reweighting gives $0.1772$ and descriptor concatenation gives $0.1777$.
Although reweighting slightly improves $C_p$ MAE and wall-pressure MAE, the near-wall field error remains essentially unchanged.
This suggests that simply emphasizing near-wall samples during training or exposing scalar boundary descriptors at the input level is insufficient to resolve the structural near-wall gap.

On the stronger GNOT backbone, the dedicated cross-attention boundary branch even degrades performance.
Near-wall relative $L_2$ increases from $0.0712$ to $0.1058$, while global relative $L_2$, $C_p$ MAE, and wall-pressure MAE all become worse.
This result is particularly important because the variant adds boundary-specific capacity rather than removing capacity.
Therefore, the failure cannot be explained by insufficient model size or lack of boundary information alone.
Instead, adding an isotropic boundary branch may interfere with the backbone representation when the added boundary signal is not aligned with the directional structure of near-wall physics.

\paragraph{Interpretation.}
These diagnostics show that boundary awareness alone is not enough.
The tested remedies can inform the model where the boundary is, increase the loss weight near the wall, or allocate additional capacity to boundary-region tokens.
However, they still treat the boundary as an isotropic high-error region.
They do not distinguish between tangential propagation along the wall and wall-normal variation across the boundary layer, nor do they model the coupling between these two directions.
As a result, they cannot consistently close the near-wall accuracy gap.

This observation motivates the design of GeoABC.
Rather than using boundary geometry only as scalar input features or training weights, GeoABC uses the solid boundary as a structural prior.
It constructs a local tangent--normal frame from geometry, decomposes boundary correction into tangential, normal, and cross-directional components, and writes the resulting correction into the intermediate representation of the backbone.
Thus, GeoABC addresses not only boundary localization, but also the anisotropic organization of near-wall physical responses.

\subsection{Spatial and Regional Attribution of GeoABC Gains}
\label{sec:appendix:spatial_attribution}

\paragraph{Attribution protocol.}
We analyze where the improvement of GeoABC comes from by decomposing the gain along surface position, wall distance, and near-wall versus non-near-wall regions.
For the 2D NACA airfoil benchmark, the near-wall band occupies only $5.42\%$ of grid points, while the non-near-wall region occupies the remaining $94.58\%$.
For the 3D car benchmark, the wetted-surface near-wall region occupies $11.10\%$ of points, while the non-near-wall region occupies $88.90\%$.
We therefore report two complementary quantities: ``regional gain'', which measures the percentage error reduction within each region, and ``gain contribution'', which measures the fraction of the total error reduction contributed by each region.
This distinction is important because a small near-wall band can have a large regional gain, while the much larger non-near-wall region can still contribute substantially to the total global error reduction.
The spatial attribution further tests whether the gain is a broad boundary-band effect rather than a leading-edge or high-curvature locality effect.
Together, these analyses examine whether GeoABC improves the targeted boundary region and whether its mid-layer correction propagates part of the benefit to the broader flow field.

\paragraph{Spatial attribution on 2D airfoil.}
As shown in \Cref{fig:spatial_attribution}, GeoABC substantially reduces pressure error along the airfoil surface.
The baseline exhibits large pointwise pressure-error peaks near chord locations with strong surface-pressure variation, whereas GeoABC keeps the error consistently low.
The distance-to-wall analysis further shows that the gain is concentrated near the wall and decays toward the far field.
At the farthest grid layer, the reduction becomes slightly negative, suggesting that GeoABC mainly improves the targeted boundary-relevant region rather than uniformly reducing error everywhere.
Finally, the curvature-quartile analysis shows that the reduction remains strong across low- and high-curvature boundary segments.
Thus, the improvement is not merely a leading-edge or high-curvature locality effect, but a broader boundary-band effect.

\begin{figure}[htbp]
    \centering
    \includegraphics[width=0.95\linewidth]{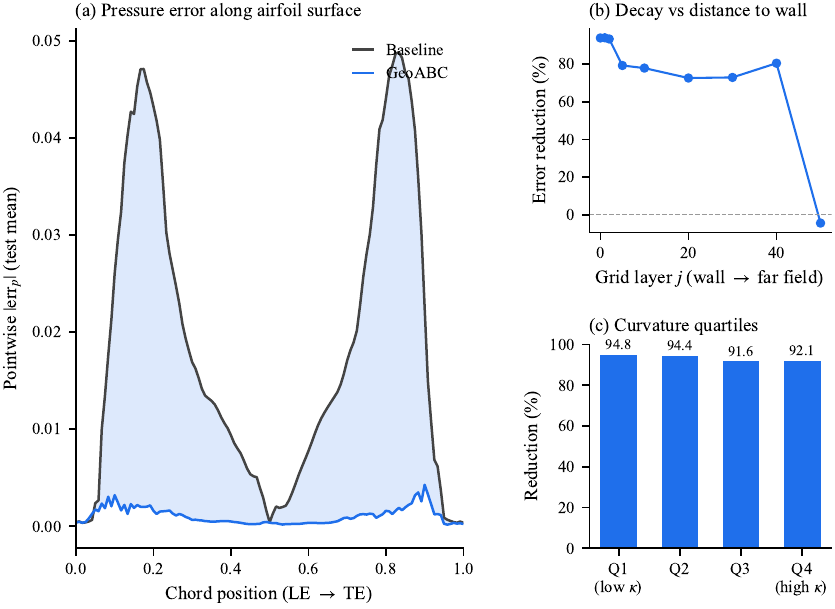}
    \caption{\textbf{Spatial attribution of GeoABC gains on Geo-FNO.}
    We analyze the multi-field $(u,v,p)$ NACA airfoil setting.
    (a) Test-mean pointwise pressure error along the airfoil surface; the shaded region denotes the per-position reduction over the baseline.
    (b) Error reduction as a function of grid layer distance from the wall to the far field.
    (c) Error reduction across boundary-curvature quartiles.
    GeoABC mainly improves the near-wall band, and the gain is broadly consistent across curvature regimes.}
    \label{fig:spatial_attribution}
\end{figure}

\paragraph{Regional decomposition across backbones.}
We next quantify how much of the global improvement is accumulated in the near-wall and non-near-wall regions.
For each region, we report two quantities.
``Gain contribution'' measures the fraction of the total error reduction contributed by that region, while ``regional gain'' measures the percentage error reduction within that region.
Therefore, the gain contribution reflects where the global improvement is accumulated, whereas the regional gain reflects how strongly GeoABC improves each region individually.

As shown in \Cref{tab:regional_gain_decomposition}, GeoABC produces strong near-wall regional gains across all 2D backbones.
Although the near-wall band occupies only $5.42\%$ of grid points, it contributes a substantial fraction of the total global MSE reduction.
For Geo-FNO, CNO, and GNOT, the near-wall regional gains reach $92.5\%$, $60.9\%$, and $71.0\%$, respectively.
At the same time, a non-trivial fraction of the total gain also comes from non-near-wall points, especially for Geo-FNO, GNOT, and Transolver.
This suggests that the mid-layer boundary correction does not remain an isolated local adjustment; after being written into the representation, it can propagate through the remaining backbone and improve the broader flow field.

\begin{table}[t]
\centering
\small
\caption{\textbf{Regional gain decomposition on 2D NACA $(u,v,p)$.}
Gain contribution reports the fraction of total global MSE reduction from each region, while regional gain reports the percentage MSE reduction within that region.}
\label{tab:regional_gain_decomposition}
\setlength{\tabcolsep}{5pt}
\begin{tabular}{lcccc}
\toprule
\multirow{2}{*}{Backbone}
& \multicolumn{2}{c}{Gain contribution}
& \multicolumn{2}{c}{Regional gain} \\
\cmidrule(lr){2-3}
\cmidrule(lr){4-5}
& Near & Non-near & Near & Non-near \\
\midrule
Geo-FNO    & $31.1\%$ & $68.9\%$ & $92.5\%$ & $71.9\%$ \\
CNO        & $62.3\%$ & $37.7\%$ & $60.9\%$ & $11.1\%$ \\
GNOT       & $33.8\%$ & $66.2\%$ & $71.0\%$ & $46.7\%$ \\
Transolver & $28.3\%$ & $71.7\%$ & $24.5\%$ & $20.8\%$ \\
\bottomrule
\end{tabular}
\end{table}

A similar pattern appears on the 3D car benchmark.
As shown in \Cref{tab:car3d_regional_gain_decomposition}, GeoABC consistently improves the wetted-surface near-wall region across all 3D backbones.
The near-wall regional gain is especially large on GNOT and Transolver, reaching $75.8\%$ and $77.1\%$, respectively.
For Geo-FNO3D, the near-wall region contributes $65.0\%$ of the total velocity-error reduction despite occupying only $11.10\%$ of points, indicating that the improvement is strongly concentrated on the wetted surface.
For GINO, GNOT, and Transolver, the non-near-wall region also contributes a large fraction of the total gain, showing that boundary-local correction can further benefit the surrounding velocity field after interacting with the remaining operator layers.

\begin{table}[htbp]
\centering
\small
\caption{\textbf{Regional gain decomposition on 3D TransolverCar.}
The decomposition uses squared rel-$L_2$ quantities corresponding to the volume-velocity and surface-velocity metrics in \Cref{tab:car3d}.}
\label{tab:car3d_regional_gain_decomposition}
\setlength{\tabcolsep}{5pt}
\begin{tabular}{lcccc}
\toprule
\multirow{2}{*}{Backbone}
& \multicolumn{2}{c}{Gain contribution}
& \multicolumn{2}{c}{Regional gain} \\
\cmidrule(lr){2-3}
\cmidrule(lr){4-5}
& Near & Non-near & Near & Non-near \\
\midrule
Geo-FNO3D  & $65.0\%$ & $35.0\%$ & $21.1\%$ & $2.3\%$ \\
GINO       & $33.5\%$ & $66.5\%$ & $31.5\%$ & $12.5\%$ \\
GNOT       & $34.8\%$ & $65.2\%$ & $75.8\%$ & $51.1\%$ \\
Transolver & $32.6\%$ & $67.4\%$ & $77.1\%$ & $54.5\%$ \\
\bottomrule
\end{tabular}
\end{table}

\paragraph{Takeaway.}
These analyses show that GeoABC primarily improves the targeted near-wall region, but its effect is not restricted to the boundary band.
Because GeoABC is inserted into an intermediate representation, the corrected near-wall features can continue to interact with the remaining operator layers.
As a result, the method reduces boundary-sensitive errors while also propagating part of the improvement to non-near-wall regions.
This supports our design choice of applying geometry-conditioned boundary correction in the latent space rather than as a purely output-level local post-processing step.
\FloatBarrier

\subsection{Additional Results on Pipe NS}
\label{sec:appendix:pipe}

\paragraph{Dataset.}
The Pipe NS dataset~\citep{li2023geofno} contains 2D incompressible Navier--Stokes flows in a straight channel with no-slip conditions on the top and bottom walls.
Each sample is defined on a uniform Cartesian grid of $129 \times 129$ nodes, and the target field contains the two velocity components and pressure, $(u,v,p)$.
We use the same $1000/200/200$ train/validation/test split as the main 2D experiments.

Pipe NS provides a complementary setting to the NACA airfoil benchmark.
Unlike the airfoil C-grid, whose near-wall region follows a curved solid boundary, the pipe domain contains two straight no-slip walls at $y=0$ and $y=1$.
Accordingly, the geometry cache is constructed with a \emph{dual-wall} boundary band: for each grid node, the boundary proximity is computed with respect to the closer of the two channel walls, and the near-wall band covers nodes within a fixed $\phi$-percentile of either wall.
This setting allows us to test whether GeoABC transfers beyond curved airfoil geometries to canonical wall-bounded flows.

\paragraph{Experimental setup.}
We apply GeoABC to two representative 2D backbones, Geo-FNO and CNO.
For a controlled comparison, we use the same injection locations, loss weights $(\lambda_b = 2.0, \lambda_a = 0.1)$, and 50-epoch training protocol as in the main NACA experiments.
Each configuration is trained once, and we report point estimates.
\Cref{tab:pipe} summarizes the results.

\begin{table}[t]
\centering
\caption{\textbf{GeoABC on Pipe NS.}
Percentage change is reported relative to the corresponding backbone baseline.}
\label{tab:pipe}
\small
\begin{tabular}{llccc}
\toprule
Backbone & Method & Global rel-$L_2$ & Near-wall rel-$L_2$ & Wall-$p$ MAE \\
\midrule
\multirow{2}{*}{Geo-FNO}
& Baseline & 0.328 & 0.839 & 0.230 \\
& GeoABC 
& 0.318 \textcolor{teal}{($-$3.0\%)}
& 0.474 \textcolor{teal}{($-$43.5\%)}
& 0.079 \textcolor{teal}{($-$65.8\%)} \\
\addlinespace
\multirow{2}{*}{CNO}
& Baseline & 0.319 & 0.765 & 0.224 \\
& GeoABC 
& 0.303 \textcolor{teal}{($-$5.0\%)}
& 0.606 \textcolor{teal}{($-$20.8\%)}
& 0.191 \textcolor{teal}{($-$14.9\%)} \\
\bottomrule
\end{tabular}
\end{table}

\paragraph{Analysis.}
GeoABC consistently improves boundary-sensitive metrics on Pipe NS.
On Geo-FNO, the near-wall rel-$L_2$ decreases from $0.839$ to $0.474$, corresponding to a $43.5\%$ reduction, while wall-pressure MAE decreases from $0.230$ to $0.079$, corresponding to a $65.8\%$ reduction.
On CNO, GeoABC also reduces near-wall rel-$L_2$ by $20.8\%$ and wall-pressure MAE by $14.9\%$.
These improvements show that the proposed correction is not limited to curved airfoil boundaries: it also transfers to straight no-slip walls, where the relevant physical structure is still organized by tangential and wall-normal directions.

The global rel-$L_2$ improvements are more moderate, with reductions of $3.0\%$ on Geo-FNO and $5.0\%$ on CNO.
This behavior is consistent with the design of GeoABC.
The correction is activated only inside the dual-wall boundary band, so its primary effect is expected to appear in near-wall velocity and pressure errors rather than uniformly across the entire channel interior.
Therefore, the results provide additional cross-dataset evidence that geometry-conditioned anisotropic boundary correction improves wall-bounded prediction while preserving the global-field behavior of the original backbone.

\section{Additional Ablation Study}
\label{appendix:add:ablation}
\begin{table}[htbp]
\centering
\scriptsize
\caption{\textbf{Component-isolation ablation on Geo-FNO.}
We report global relative $L_2$, near-wall relative $L_2$, near-wall velocity relative $L_2$, and body-pressure MAE on the multi-field $(u,v,p)$ airfoil task.
Results are reported as mean $\pm$ standard deviation.}
\label{tab:isolation_geofno}
\setlength{\tabcolsep}{5pt}
\renewcommand{\arraystretch}{1.08}
\resizebox{\linewidth}{!}{
\begin{tabular}{lcccc}
\toprule
Configuration & Global $\downarrow$ & Near $\downarrow$ & UV Near $\downarrow$ & Body Pres $\downarrow$ \\
\midrule
Geo-FNO
& $0.0608{\pm}0.0028$ & $0.1320{\pm}0.0061$ & $0.1307{\pm}0.0064$ & $0.0934{\pm}0.0040$ \\
\texttt{no-inject}
& $0.0618{\pm}0.0035$ & $0.1285{\pm}0.0067$ & $0.1274{\pm}0.0068$ & $0.0904{\pm}0.0058$ \\
\texttt{no-frame}
& $0.0495{\pm}0.0012$ & $0.0794{\pm}0.0055$ & $0.0779{\pm}0.0052$ & $0.0444{\pm}0.0027$ \\
\texttt{normal-only}
& $0.0294{\pm}0.0015$ & $0.0361{\pm}0.0009$ & $0.0361{\pm}0.0010$ & $0.0191{\pm}0.0001$ \\
\texttt{zero-geometry}
& $0.0511{\pm}0.0028$ & $0.0845{\pm}0.0051$ & $0.0828{\pm}0.0049$ & $0.0495{\pm}0.0040$ \\
\midrule
\textbf{GeoABC}
& $\mathbf{0.0297{\pm}0.0006}$ & $\mathbf{0.0369{\pm}0.0012}$ & $\mathbf{0.0369{\pm}0.0007}$ & $\mathbf{0.0191{\pm}0.0017}$ \\
\bottomrule
\end{tabular}
}
\end{table}

\begin{table}[t]
\centering
\scriptsize
\caption{\textbf{Component-isolation ablation on CNO.}
We report global relative $L_2$, near-wall relative $L_2$, near-wall velocity relative $L_2$, and body-pressure MAE on the multi-field $(u,v,p)$ airfoil task.
Results are reported as mean $\pm$ standard deviation.}
\label{tab:isolation_cno}
\setlength{\tabcolsep}{5pt}
\renewcommand{\arraystretch}{1.08}
\resizebox{\linewidth}{!}{
\begin{tabular}{lcccc}
\toprule
Configuration & Global $\downarrow$ & Near $\downarrow$ & UV Near $\downarrow$ & Body Pres $\downarrow$ \\
\midrule
CNO
& $0.0194{\pm}0.0044$ & $0.0398{\pm}0.0094$ & $0.0409{\pm}0.0095$ & $0.0248{\pm}0.0061$ \\
\texttt{no-inject}
& $0.0315{\pm}0.0001$ & $0.0485{\pm}0.0049$ & $0.0486{\pm}0.0050$ & $0.0242{\pm}0.0001$ \\
\texttt{no-frame}
& $0.0295{\pm}0.0134$ & $0.0483{\pm}0.0274$ & $0.0484{\pm}0.0267$ & $0.0262{\pm}0.0154$ \\
\texttt{normal-only}
& $0.0190{\pm}0.0017$ & $0.0285{\pm}0.0023$ & $0.0289{\pm}0.0019$ & $0.0145{\pm}0.0026$ \\
\texttt{zero-geometry}
& $0.0270{\pm}0.0044$ & $0.0437{\pm}0.0094$ & $0.0441{\pm}0.0095$ & $0.0204{\pm}0.0061$ \\
\midrule
\textbf{GeoABC}
& $\mathbf{0.0171{\pm}0.0027}$ & $\mathbf{0.0242{\pm}0.0036}$ & $\mathbf{0.0246{\pm}0.0035}$ & $\mathbf{0.0121{\pm}0.0014}$ \\
\bottomrule
\end{tabular}
}
\end{table}

\begin{table}[t]
\centering
\scriptsize
\caption{\textbf{Component-isolation ablation on GINO Car.}
We report surface-velocity relative $L_2$, volume-velocity relative $L_2$, pressure relative $L_2$, and surface-pressure MAE on the 3D car benchmark.
Results are reported as mean $\pm$ standard deviation.}
\label{tab:isolation_gino}
\setlength{\tabcolsep}{5pt}
\renewcommand{\arraystretch}{1.08}
\resizebox{\linewidth}{!}{
\begin{tabular}{lcccc}
\toprule
Configuration & Surf. Vel $\downarrow$ & Vol. Vel $\downarrow$ & Pressure $\downarrow$ & Surf. Pres $\downarrow$ \\
\midrule
GINO
& $0.0432{\pm}0.0015$ & $0.0353{\pm}0.0008$ & $0.1139{\pm}0.0043$ & $3.2546{\pm}0.0993$ \\
\texttt{no-inject}
& $0.0391{\pm}0.0002$ & $0.0348{\pm}0.0001$ & $0.1107{\pm}0.0020$ & $3.1726{\pm}0.0558$ \\
\texttt{no-frame}
& $0.0380{\pm}0.0008$ & $0.0340{\pm}0.0004$ & $0.1114{\pm}0.0020$ & $3.1523{\pm}0.0526$ \\
\texttt{normal-only}
& $0.0382{\pm}0.0004$ & $0.0341{\pm}0.0002$ & $0.1111{\pm}0.0025$ & $3.1370{\pm}0.0287$ \\
\texttt{zero-geometry}
& $0.4693{\pm}0.0168$ & $0.3838{\pm}0.0091$ & $1.2387{\pm}0.0462$ & $35.3888{\pm}1.0797$ \\
\midrule
\textbf{GeoABC}
& $\mathbf{0.0378{\pm}0.0006}$ & $\mathbf{0.0340{\pm}0.0003}$ & $\mathbf{0.1102{\pm}0.0021}$ & $\mathbf{3.1057{\pm}0.0438}$ \\
\bottomrule
\end{tabular}
}
\end{table}

\subsection{Component isolation details.}
\label{sec:appendix:component_details}
\Cref{tab:isolation_geofno,tab:isolation_cno,tab:isolation_gino} provide detailed component-isolation ablations of GeoABC across representative 2D and 3D backbones.
Unlike the compact ablation discussion in the main text, these tables report multiple metrics for each backbone, allowing us to inspect whether each component affects only the near-wall metric or also changes global-field and surface-pressure behavior.

The \texttt{no-inject} variant removes the mid-layer correction write-back while keeping the same loss design.
It tests whether the improvement can be attributed to the boundary-related training losses alone, without modifying the backbone representation.

The \texttt{no-frame} variant keeps the correction module but removes the local tangent--normal frame.
Thus, the model still has additional correction capacity, but the correction is no longer organized by the boundary-aligned tangential and wall-normal directions.

The \texttt{normal-only} variant keeps the local frame but only retains the wall-normal branch.
It removes the tangential branch and the tangential--normal coupling branch, testing whether wall-normal correction alone is sufficient.

The \texttt{zero-geometry} variant keeps the full correction architecture but zeros out the local geometric inputs used by the geometry modulator.
Therefore, the module still has comparable correction capacity, but it can no longer condition the correction on local boundary proximity, curvature, and orientation.
This variant isolates the role of geometry-conditioned modulation from the role of additional parameters.

\paragraph{Boundary losses alone are not sufficient.}
The \texttt{no-inject} variant shows that the three-loss objective alone cannot explain the improvement of GeoABC.
On Geo-FNO, \texttt{no-inject} only changes near-wall relative $L_2$ from $0.1320$ to $0.1285$, a small reduction of $2.7\%$.
The corresponding UV near-wall error changes from $0.1307$ to $0.1274$, and body-pressure MAE changes from $0.0934$ to $0.0904$.
These changes are minor compared with full GeoABC, which reduces the same metrics to $0.0369$, $0.0369$, and $0.0191$, respectively.

The same conclusion is clearer on CNO.
With \texttt{no-inject}, the global error increases from $0.0194$ to $0.0315$, and near-wall error increases from $0.0398$ to $0.0485$.
Thus, merely adding boundary-related losses can even hurt the backbone if no representation-level correction is available.
On GINO Car, \texttt{no-inject} gives a moderate improvement in surface velocity from $0.0432$ to $0.0391$, but the gains in volume velocity, pressure, and surface-pressure MAE remain small.
Overall, \texttt{no-inject} confirms that the main improvement of GeoABC does not come from loss reweighting or auxiliary supervision alone.

\paragraph{The tangent--normal frame organizes boundary correction.}
The \texttt{no-frame} variant shows what happens when the model has a correction module but lacks the boundary-aligned coordinate system.
On Geo-FNO, \texttt{no-frame} reduces near-wall error from $0.1320$ to $0.0794$, which is a sizable improvement over the baseline.
However, full GeoABC further reduces it to $0.0369$, corresponding to an additional $53.5\%$ reduction relative to \texttt{no-frame}.
The same pattern appears in UV near-wall error, where \texttt{no-frame} reaches $0.0779$, while full GeoABC reaches $0.0369$.
This indicates that a generic correction module can help, but a boundary-aligned frame is needed to fully organize the correction.

On CNO, removing the frame is harmful.
The near-wall error increases from the baseline value $0.0398$ to $0.0483$, and the standard deviation is also large.
This suggests that an unstructured correction can interact unstably with a convolutional operator backbone.
On GINO Car, \texttt{no-frame} performs close to full GeoABC on surface velocity and volume velocity, but remains slightly worse on pressure and surface-pressure MAE.
Taken together, the frame is not merely an implementation detail: it provides a structural coordinate system that makes the correction more reliable across different operator families.

\paragraph{The wall-normal branch captures the dominant near-wall effect.}
The \texttt{normal-only} variant is consistently strong.
On Geo-FNO, it reduces global error from $0.0608$ to $0.0294$, near-wall error from $0.1320$ to $0.0361$, UV near-wall error from $0.1307$ to $0.0361$, and body-pressure MAE from $0.0934$ to $0.0191$.
These correspond to reductions of approximately $51.6\%$, $72.7\%$, $72.4\%$, and $79.6\%$, respectively.
This confirms that the wall-normal direction is a dominant source of near-wall error in the airfoil task.

On CNO, \texttt{normal-only} also improves all boundary-sensitive metrics.
Near-wall error decreases from $0.0398$ to $0.0285$, UV near-wall error decreases from $0.0409$ to $0.0289$, and body-pressure MAE decreases from $0.0248$ to $0.0145$.
However, full GeoABC further improves these values to $0.0242$, $0.0246$, and $0.0121$.
Thus, while wall-normal correction explains a large part of the gain, the tangential and cross-directional branches still provide additional benefit, especially on CNO.

On GINO Car, \texttt{normal-only} is also competitive, reducing surface velocity from $0.0432$ to $0.0382$ and surface-pressure MAE from $3.2546$ to $3.1370$.
Full GeoABC further improves these metrics to $0.0378$ and $3.1057$.
The improvement over \texttt{normal-only} is smaller in 3D than in 2D, but it is consistent across all four GINO metrics.
Therefore, wall-normal correction is the strongest single component, while the full anisotropic design provides the most robust overall profile across backbones and metrics.

\paragraph{Geometry-conditioned modulation is essential.}
The \texttt{zero-geometry} variant directly tests whether the gains come from extra correction capacity alone.
On Geo-FNO, \texttt{zero-geometry} improves over the baseline, reducing near-wall error from $0.1320$ to $0.0845$ and body-pressure MAE from $0.0934$ to $0.0495$.
However, it remains far worse than full GeoABC, which reaches $0.0369$ near-wall error and $0.0191$ body-pressure MAE.
Relative to \texttt{zero-geometry}, full GeoABC further reduces near-wall error by about $56.3\%$ and body-pressure MAE by about $61.4\%$.
This shows that additional correction capacity alone is not sufficient; the correction must be conditioned on local geometric information.

On CNO, \texttt{zero-geometry} is also weaker than full GeoABC.
It increases global error from $0.0194$ to $0.0270$ and near-wall error from $0.0398$ to $0.0437$.
Although body-pressure MAE decreases from $0.0248$ to $0.0204$, full GeoABC is still much better, reaching $0.0121$.
Thus, without geometry-conditioned modulation, the correction does not consistently improve the field-level errors.

The effect is most pronounced on GINO Car.
When local geometric inputs are removed, all metrics collapse: surface velocity increases from $0.0432$ to $0.4693$, volume velocity from $0.0353$ to $0.3838$, pressure from $0.1139$ to $1.2387$, and surface-pressure MAE from $3.2546$ to $35.3888$.
These errors are roughly one order of magnitude larger than the baseline.
This sharp degradation indicates that, in the 3D surface setting, an active correction module without reliable geometric conditioning can severely distort the learned representation.
Therefore, local geometry is not a minor auxiliary feature; it determines where the correction should act, how strongly it should be applied, and how it should be oriented.

\paragraph{Cross-metric interpretation.}
Across the three tables, GeoABC improves not only the selected near-wall metric but also multiple boundary-sensitive quantities.
On Geo-FNO, full GeoABC reduces global error from $0.0608$ to $0.0297$, near-wall error from $0.1320$ to $0.0369$, UV near-wall error from $0.1307$ to $0.0369$, and body-pressure MAE from $0.0934$ to $0.0191$.
The largest reductions appear in near-wall and body-pressure metrics, which are consistent with the design goal of GeoABC.

On CNO, full GeoABC reduces global error from $0.0194$ to $0.0171$, near-wall error from $0.0398$ to $0.0242$, UV near-wall error from $0.0409$ to $0.0246$, and body-pressure MAE from $0.0248$ to $0.0121$.
The global improvement is moderate, while the boundary-sensitive improvements are much larger.
This pattern supports the claim that GeoABC mainly targets near-wall errors rather than simply improving the whole field uniformly.

On GINO Car, full GeoABC reduces surface velocity from $0.0432$ to $0.0378$, volume velocity from $0.0353$ to $0.0340$, pressure from $0.1139$ to $0.1102$, and surface-pressure MAE from $3.2546$ to $3.1057$.
The gains are smaller than in the 2D airfoil setting, but they are consistent across surface, volume, pressure, and surface-pressure metrics.
This indicates that the same anisotropic correction principle remains beneficial in 3D, even though the stronger geometric and physical complexity makes the margin more moderate.

\paragraph{Overall conclusion.}
These ablations support four conclusions.
First, boundary-related losses alone are insufficient, as shown by the weak and inconsistent behavior of \texttt{no-inject}.
Second, the correction module must be organized by the local tangent--normal frame; otherwise, the added correction capacity can be suboptimal or unstable.
Third, the wall-normal branch captures the dominant near-wall effect, but the full tangential--normal--cross design gives the best cross-backbone behavior overall.
Fourth, geometry-conditioned modulation is essential: removing local geometric inputs substantially weakens the method and can cause severe failure in 3D.
Together, these results show that GeoABC's gains come from the combination of mid-layer correction, local boundary-frame organization, direction-aware branches, and geometry-conditioned modulation, rather than from extra parameters, boundary-aware losses, or boundary localization alone.

\subsection{Ablation on Injection Position}
\label{sec:appendix:injection_position}
\begin{table}[t]
\centering
\small
\setlength{\tabcolsep}{5.5pt}
\caption{\textbf{Ablation on the injection position.}
Relative reductions are computed against the baseline without injection.}
\label{tab:injection_position_ablation}
\resizebox{\linewidth}{!}{
\begin{tabular}{lcccc}
\toprule
Injection position & Global rel-$L_2$ $\downarrow$ & Near-wall rel-$L_2$ $\downarrow$ & $C_p$ MAE $\downarrow$ & Body Pres. MAE $\downarrow$ \\
\midrule
Baseline w/o injection
& 0.07170
& 0.17800
& 0.30549
& 0.13686 \\
\midrule
After \texttt{conv1} (early)
& 0.06504 {\scriptsize (-9.3\%)}
& 0.11450 {\scriptsize (-35.7\%)}
& 0.15940 {\scriptsize (-47.8\%)}
& 0.07139 {\scriptsize (-47.8\%)} \\

After \texttt{conv2} (middle)
& \textbf{0.03605} {\scriptsize (-49.7\%)}
& \textbf{0.05615} {\scriptsize (-68.5\%)}
& \textbf{0.06182} {\scriptsize (-79.8\%)}
& \textbf{0.02769} {\scriptsize (-79.8\%)} \\

After \texttt{conv3} (late)
& 0.06944 {\scriptsize (-3.2\%)}
& 0.14790 {\scriptsize (-16.9\%)}
& 0.23760 {\scriptsize (-22.2\%)}
& 0.10645 {\scriptsize (-22.2\%)} \\
\bottomrule
\end{tabular}
}
\end{table}
We conduct an injection-position ablation to examine where the GeoABC correction module should be inserted into the Geo-FNO backbone.
The sweep is performed under the pressure-only NACA airfoil body-only setting.
All variants are trained for 50 epochs, and we report the validation-best checkpoint from a single seed.
The baseline is the same Geo-FNO backbone without the GeoABC correction module.
For each injected variant, the relative reduction in parentheses is computed against this no-injection baseline.

As shown in \Cref{tab:injection_position_ablation}, inserting GeoABC after \texttt{conv2} achieves the best performance across all metrics.
It reduces global rel-$L_2$ from $0.07170$ to $0.03605$, near-wall rel-$L_2$ from $0.17800$ to $0.05615$, $C_p$ MAE from $0.30549$ to $0.06182$, and body-pressure MAE from $0.13686$ to $0.02769$.
This confirms that the proposed correction is most effective when applied to an intermediate representation.

Early insertion after \texttt{conv1} already improves the boundary-sensitive metrics, but it is clearly weaker than the middle-layer insertion.
This suggests that very early features have not yet accumulated sufficient global physical context, so the geometry-conditioned correction is applied before the backbone has formed a sufficiently informative flow representation.
By contrast, late insertion after \texttt{conv3} also underperforms because the correction is introduced too close to the output head.
At this stage, there are too few subsequent operator layers for the corrected boundary representation to interact with the global field and propagate through the decoder.
Therefore, the injection-position ablation supports our mid-layer design: GeoABC should operate after the backbone has built useful physical context, but before the representation becomes too close to final decoding.

\subsection{Boundary-Loss Weight Ablation and Band-Definition Sensitivity}
\label{sec:appendix:lambda_b_band_sensitivity}

We include an additional controlled study in \cref{tab:robustness_lambda_b}
to examine whether GeoABC depends on a finely tuned boundary-loss weight or on
a particular definition of the near-wall band. The experiment uses the Geo-FNO
backbone on the NACA airfoil benchmark with body-only boundary geometry, trains
the pressure field for $50$ epochs, and reports the validation-best checkpoint.
In the top block, we fix $\lambda_a{=}0.1$ and sweep
$\lambda_b\in\{1,2,4\}$. The results show a smooth tradeoff rather than
hyperparameter brittleness: $\lambda_b{=}1$ gives the lowest global rel-$L_2$,
whereas larger $\lambda_b$ increasingly favors boundary-local quantities.
Moving from $\lambda_b{=}1$ to $\lambda_b{=}4$ improves near rel-$L_2$ from
$0.06049$ to $0.05235$, $C_p$ MAE from $0.07143$ to $0.06068$, and body
pressure MAE from $0.03200$ to $0.02718$, while the global rel-$L_2$ changes
only mildly from $0.03420$ to $0.03627$. All three settings remain far better
than the baseline, whose near rel-$L_2$ is $0.17800$, indicating that the
boundary correction is not tied to a narrow choice of $\lambda_b$.

The bottom block further checks whether the reported near-wall improvement is
an artefact of the cached boundary window. We re-evaluate near rel-$L_2$ on
$\phi$-top-$k\%$ bands, where smaller $k$ selects points closer to the airfoil
surface. GeoABC consistently outperforms the baseline for all tested bands.
The absolute gap is largest in the tightest $3\%$ band
($0.19105\rightarrow0.05060$), and remains substantial as the band widens to
$20\%$ ($0.14460\rightarrow0.06667$). This pattern suggests that the gain is
strongest exactly where boundary errors are expected to concentrate, while
still extending beyond the thinnest wall-adjacent region. Therefore, the
near-wall improvement is not caused by one particular narrow-band definition;
it reflects a stable boundary-local correction effect.

\begin{table}[t]
\centering
\small
\caption{\textbf{Boundary-loss weight ablation and band-definition sensitivity.}
The $\lambda_b$ sweep shows a smooth global/near-wall tradeoff, while the
$\phi$-top-$k\%$ evaluation shows that GeoABC's near-wall advantage persists
across multiple boundary-band definitions.}
\label{tab:robustness_lambda_b}
\begin{tabular}{lcccc}
\toprule
\multicolumn{5}{l}{\textbf{(a)} $\lambda_b$ sweep with $\lambda_a{=}0.1$ fixed} \\
\midrule
Method & global rel-$L_2$ & near rel-$L_2$ & $C_p$ MAE & body $p$ MAE \\
\midrule
Baseline & 0.07170 & 0.17800 & 0.30549 & 0.13686 \\
GeoABC, $\lambda_b{=}1$ & \textbf{0.03420} & 0.06049 & 0.07143 & 0.03200 \\
GeoABC, $\lambda_b{=}2$ & 0.03605 & 0.05615 & 0.06182 & 0.02769 \\
GeoABC, $\lambda_b{=}4$ & 0.03627 & \textbf{0.05235} & \textbf{0.06068} & \textbf{0.02718} \\
\midrule
\multicolumn{5}{l}{\textbf{(b)} Near rel-$L_2$ under alternative boundary-band definitions} \\
\midrule
Method & $\phi$-top 3\% & $\phi$-top 5.4\% & $\phi$-top 10\% & $\phi$-top 20\% \\
\midrule
Baseline & 0.19105 & 0.18333 & 0.16785 & 0.14460 \\
GeoABC & \textbf{0.05060} & \textbf{0.05722} & \textbf{0.06855} & \textbf{0.06667} \\
\bottomrule
\end{tabular}
\end{table}

\section{Additional Visualizations}
\label{sec:appendix:visualization}
\begin{figure}[htbp]
    \centering
    \includegraphics[width=0.95\linewidth]{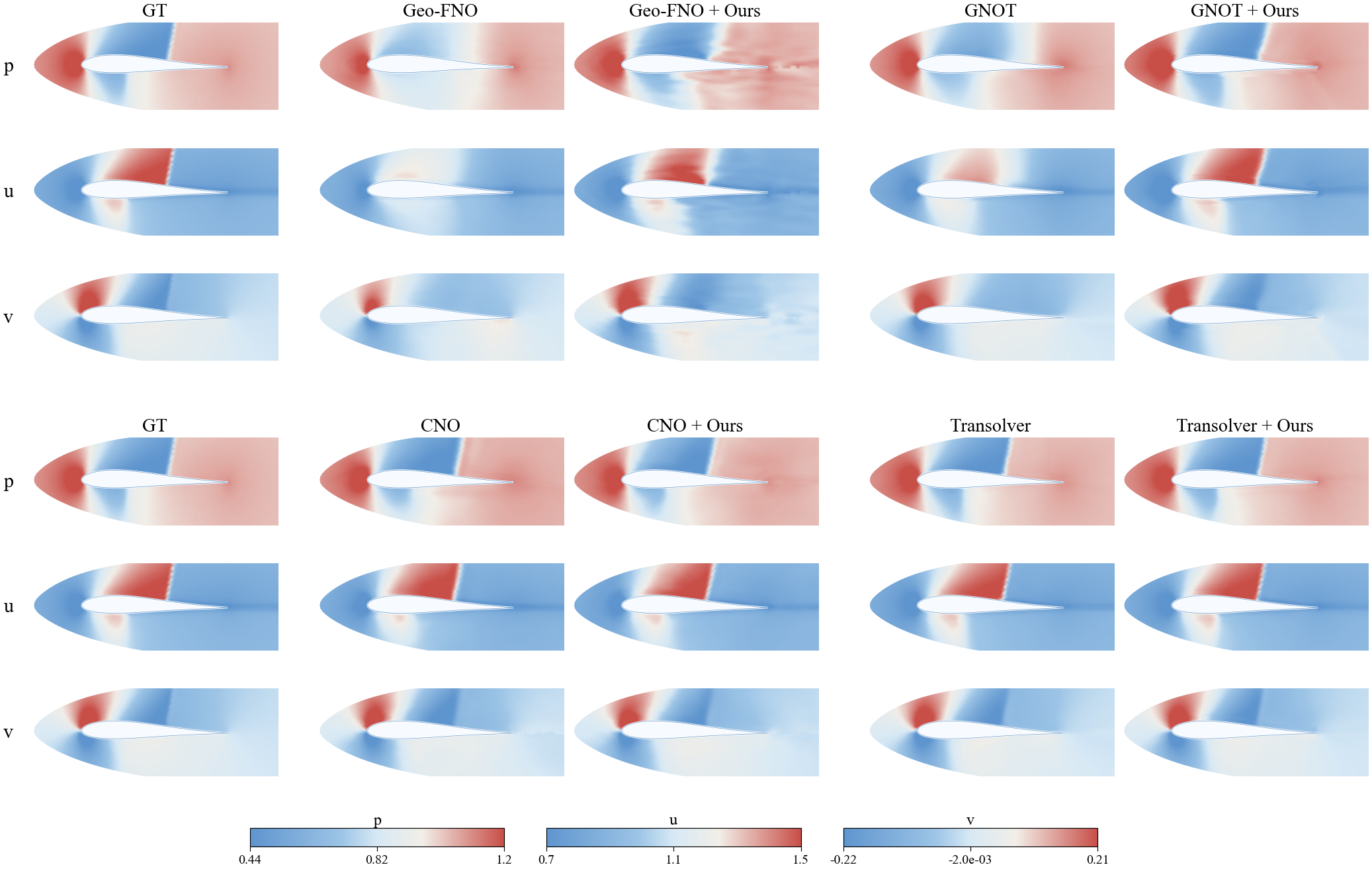}
    \caption{
Qualitative prediction comparison on the 2D NACA airfoil benchmark.
We visualize the ground truth fields and predictions from four backbones, with each baseline paired with its GeoABC-enhanced variant.
Rows show pressure $p$ and velocity components $u,v$.
GeoABC improves the near-wall and wake structures across Geo-FNO, GNOT, CNO, and Transolver, reducing the blurred or displaced patterns visible in the corresponding baselines.
}

    \label{fig:2d_pred_1}
\end{figure}

\begin{figure}[htbp]
    \centering
    \includegraphics[width=1.0\linewidth]{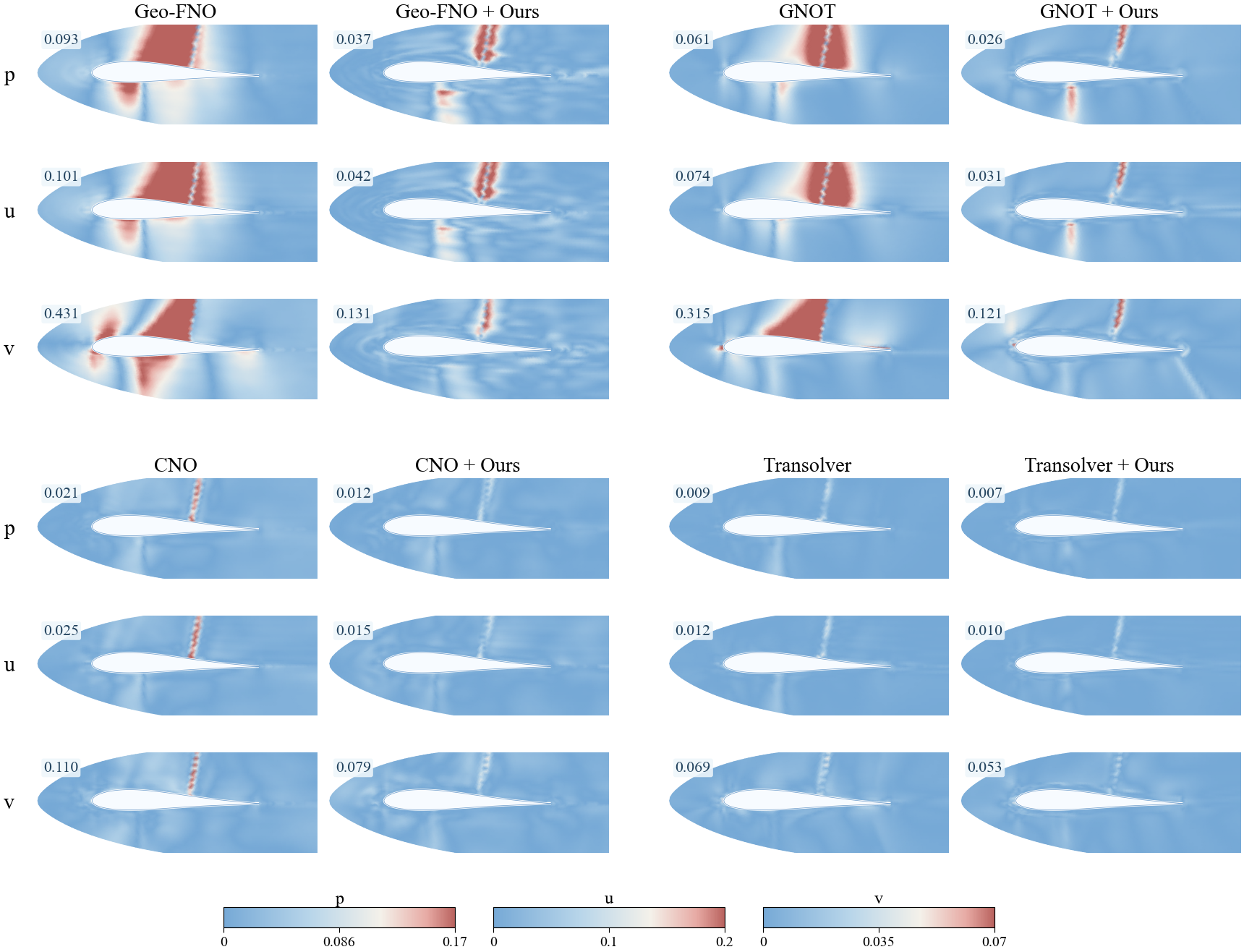}
    \caption{
Absolute error maps on the 2D NACA airfoil benchmark.
Each panel shows the pointwise absolute error between prediction and ground truth for pressure $p$ and velocity components $u,v$.
The overlaid numbers report near-wall relative $\ell_2$ error in the cropped airfoil region.
Across the four backbones, GeoABC consistently reduces concentrated near-wall and wake errors compared with the corresponding baselines.
}

    \label{fig:2d_err_1}
\end{figure}

\begin{figure}[htbp]
    \centering
    \includegraphics[width=1.0\linewidth]{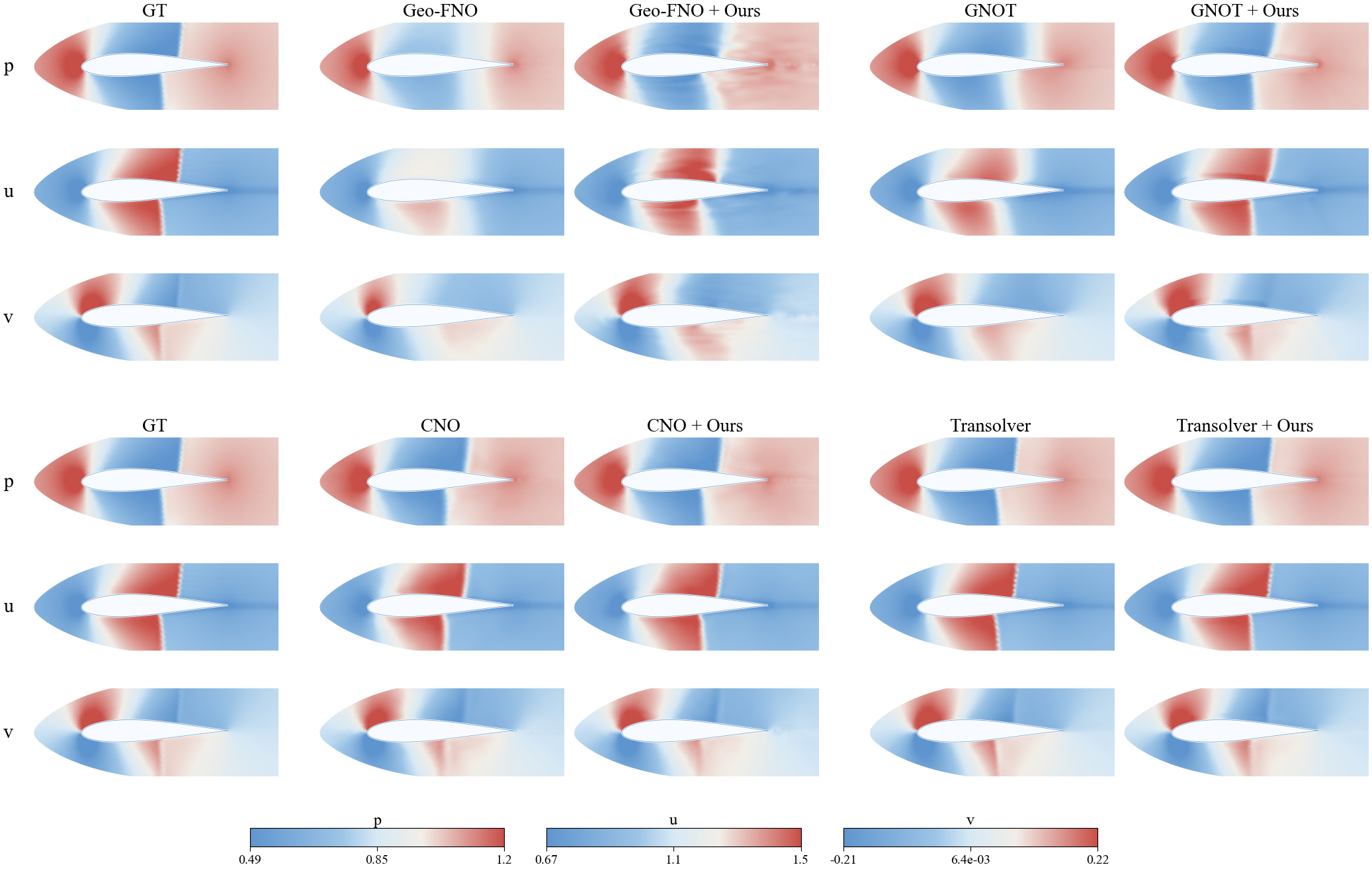}
    \caption{
Qualitative prediction comparison on the 2D NACA airfoil benchmark.
We visualize the ground truth fields and predictions from four backbones, with each baseline paired with its GeoABC-enhanced variant.
Rows show pressure $p$ and velocity components $u,v$.
GeoABC improves the near-wall and wake structures across Geo-FNO, GNOT, CNO, and Transolver, reducing the blurred or displaced patterns visible in the corresponding baselines.
}

    \label{fig:2d_pred_2}
\end{figure}

\begin{figure}[htbp]
    \centering
    \includegraphics[width=1.0\linewidth]{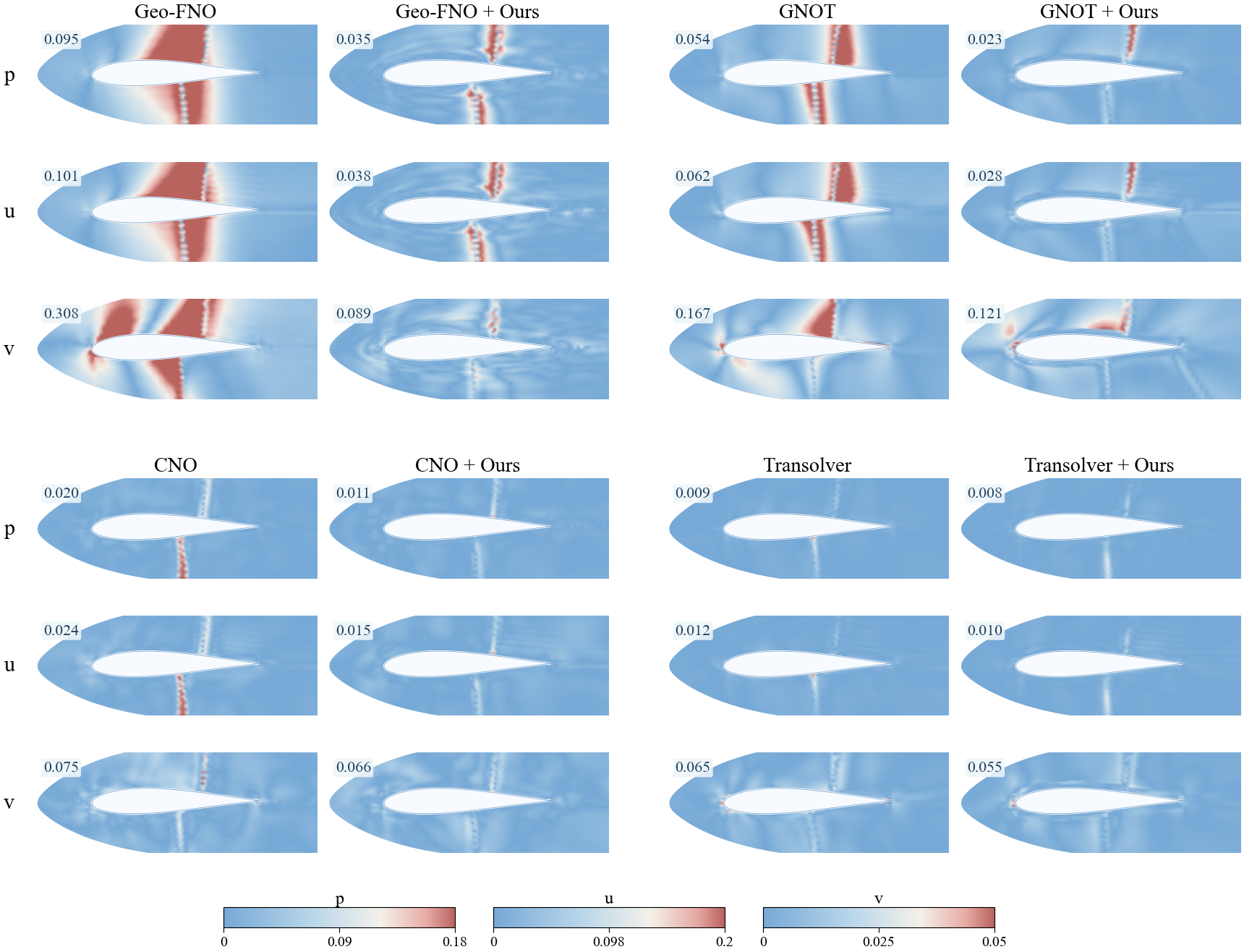}
    \caption{
Absolute error maps on the 2D NACA airfoil benchmark.
Each panel shows the pointwise absolute error between prediction and ground truth for pressure $p$ and velocity components $u,v$.
The overlaid numbers report near-wall relative $\ell_2$ error in the cropped airfoil region.
Across the four backbones, GeoABC consistently reduces concentrated near-wall and wake errors compared with the corresponding baselines.
}

    \label{fig:2d_err_2}
\end{figure}

\begin{figure}[htbp]
    \centering
    \includegraphics[width=1.0\linewidth]{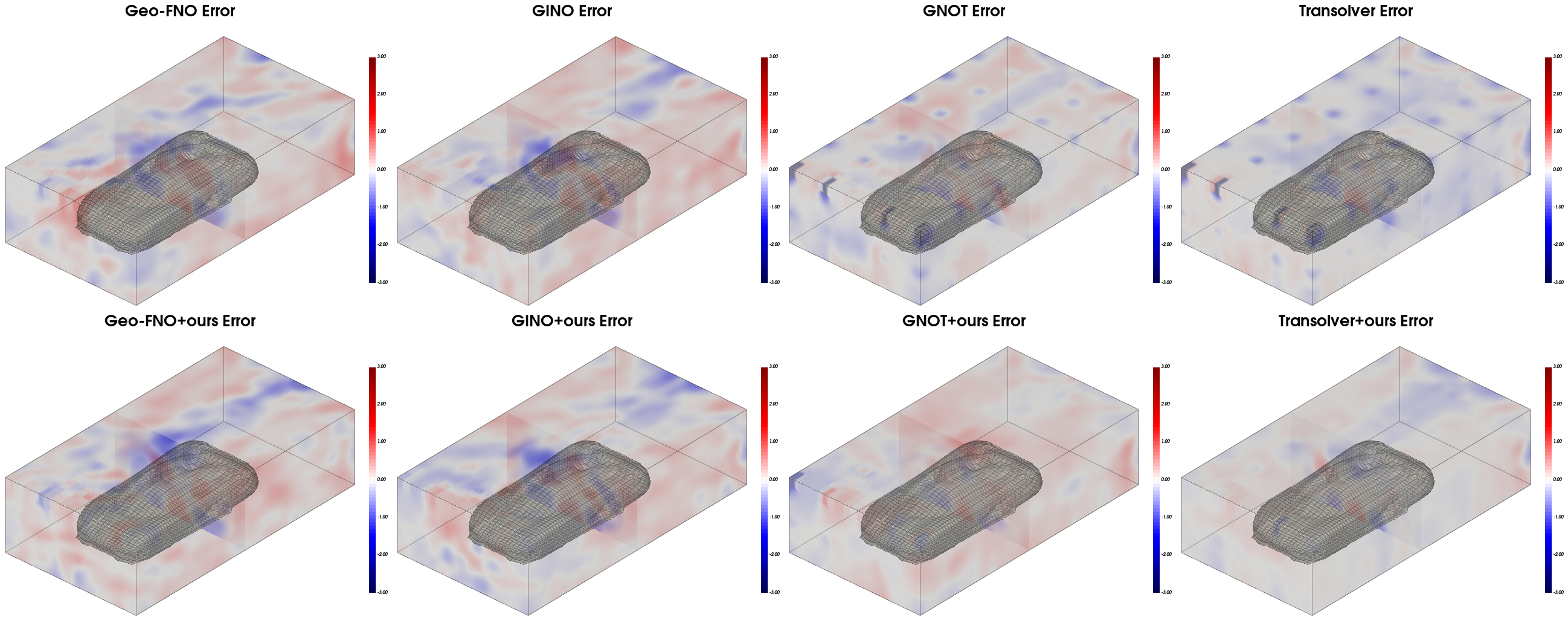}
\caption{
Visualization of surrounding velocity prediction errors on the Shape-Net Car benchmark.
Each panel shows the signed error of the predicted velocity magnitude in the 3D flow domain around the car, computed as $\|\hat{\mathbf{u}}\|_2 - \|\mathbf{u}\|_2$ and visualized on translucent slices of the surrounding volume.
The car surface is rendered in light gray with mesh edges for geometric reference.
The top row shows the baseline models, while the bottom row shows the corresponding variants enhanced with our method.
All panels share the same color scale, where red and blue indicate over- and under-predicted velocity magnitude, respectively.
}

    \label{fig:3dcar_v}
\end{figure}

\begin{figure}[htbp]
    \centering
    \includegraphics[width=1.0\linewidth]{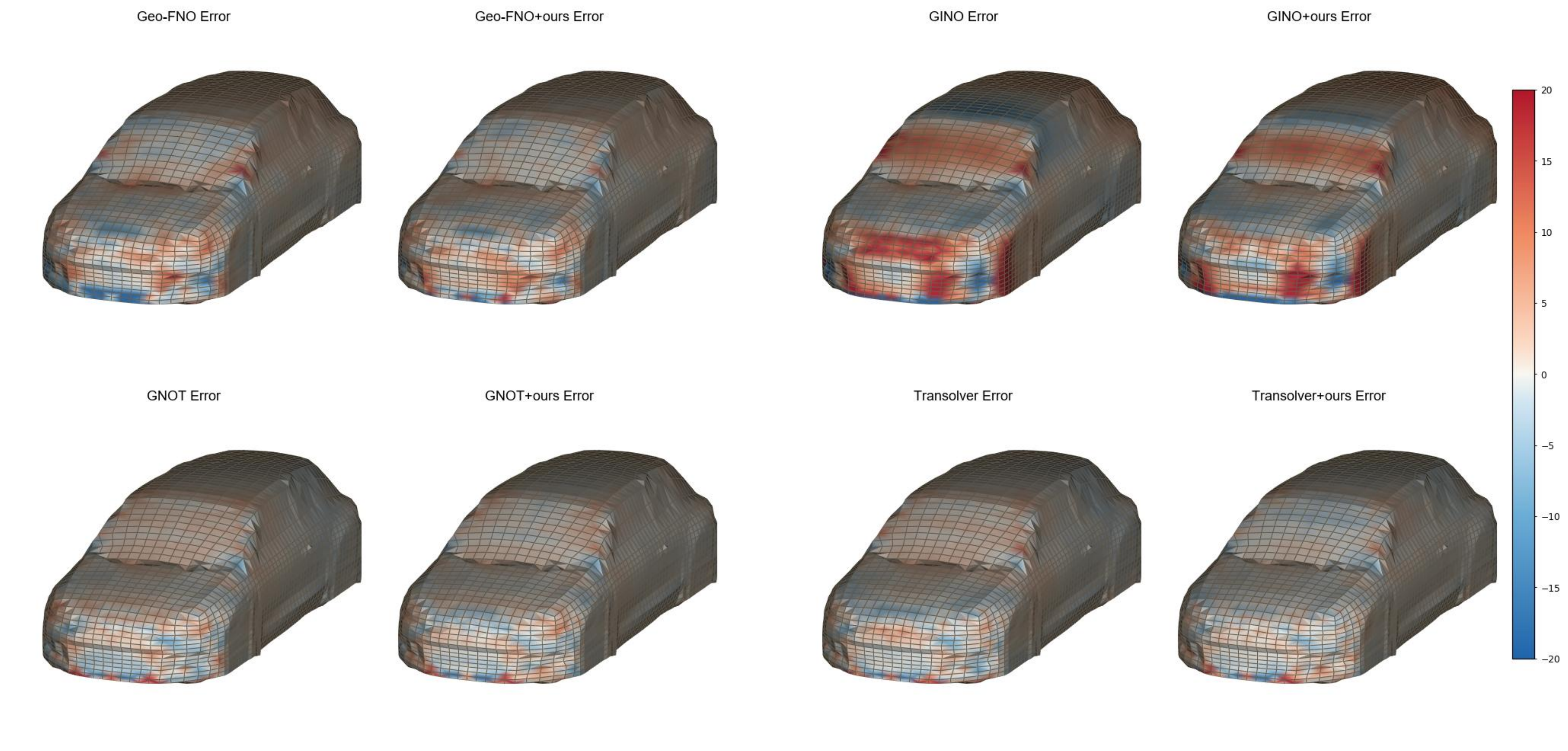}
\caption{\textbf{Surface pressure error visualization on the ShapeNetCar benchmark.}
We visualize the pressure prediction error on the same test car for Geo-FNO, GINO, GNOT, and Transolver, comparing each baseline with its GeoABC-enhanced counterpart under a shared color scale. GeoABC consistently suppresses large spatial error patterns on the vehicle surface, especially around high-curvature and front-body regions, showing that the proposed anisotropic boundary correction improves pressure prediction near solid boundaries across different neural operator backbones.}
    \label{fig:3dcar_pressure_1}
\end{figure}

\begin{figure}[htbp]
    \centering
    \includegraphics[width=1.0\linewidth]{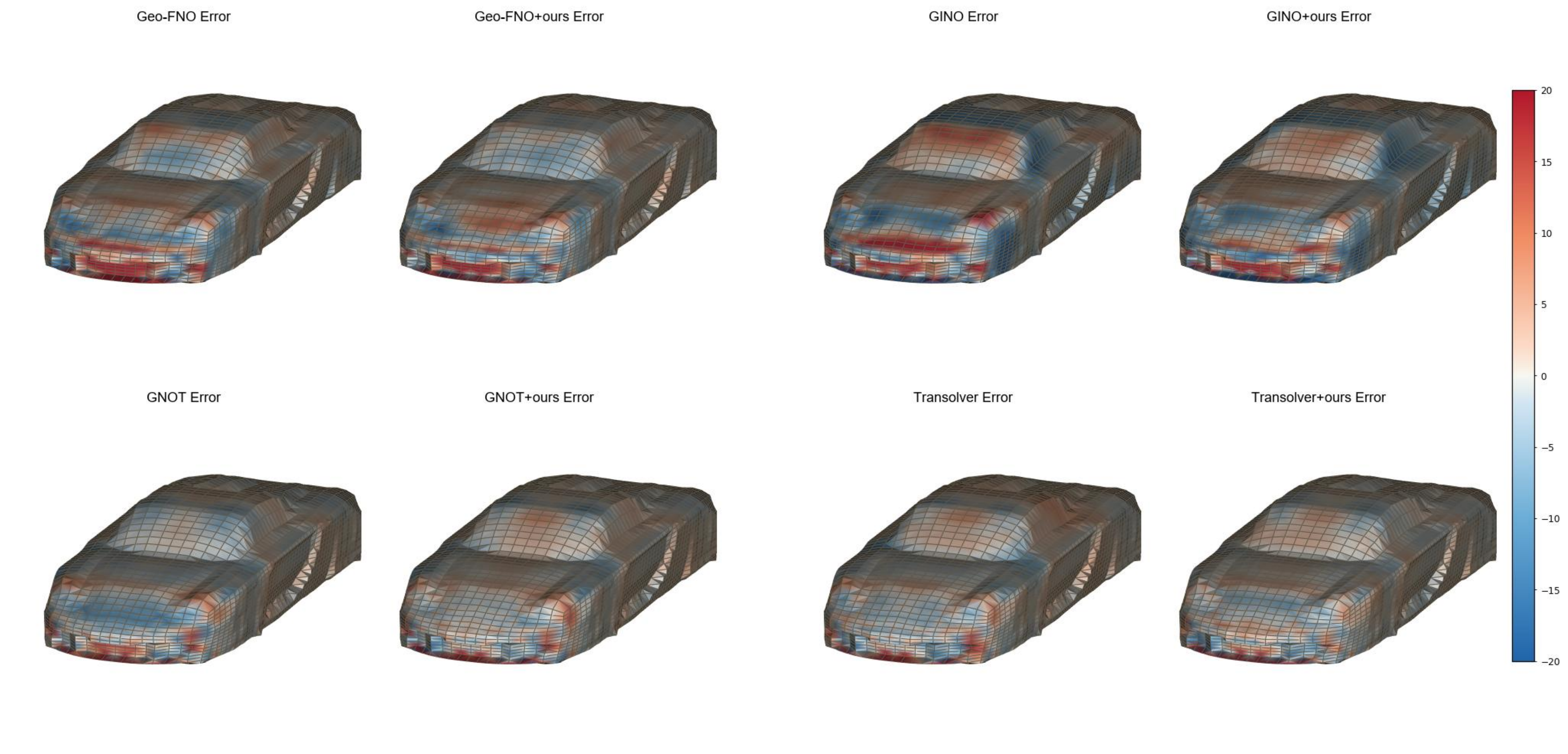}
\caption{\textbf{Surface pressure error visualization on the ShapeNetCar benchmark.}
We visualize the pressure prediction error on the same test car for Geo-FNO, GINO, GNOT, and Transolver, comparing each baseline with its GeoABC-enhanced counterpart under a shared color scale. GeoABC consistently suppresses large spatial error patterns on the vehicle surface, especially around high-curvature and front-body regions, showing that the proposed anisotropic boundary correction improves pressure prediction near solid boundaries across different neural operator backbones.}

    \label{fig:3dcar_pressure_2}
\end{figure}

\FloatBarrier

\section{Training and Implementation Details}
\label{sec:appendix:hparams}

\subsection{Baselines and Common Training Protocol}
\label{sec:appendix:baselines_training}

\paragraph{Baselines.}
We evaluate GeoABC as a plug-in correction module on representative neural operator backbones covering spectral, convolutional, graph/kernel-based, and attention-based operator families.

\begin{itemize}
    \item \textbf{Geo-FNO.}
    Geo-FNO~\citep{li2023geofno} extends Fourier neural operators to irregular geometries by mapping the physical domain to a latent regular grid and applying Fourier layers in the transformed space.
    We use Geo-FNO as the primary spectral operator baseline on the 2D NACA airfoil benchmark.
    For the 3D ShapeNetCar benchmark, we use the corresponding Geo-FNO3D variant with 3D Fourier modes and a latent volumetric grid.

    \item \textbf{CNO.}
    CNO~\citep{raonic2023cno} is a convolutional neural operator that learns resolution-consistent mappings between function spaces through encoder--decoder convolutional blocks.
    It provides a grid-based non-spectral baseline on the 2D NACA benchmark.
    This allows us to test whether GeoABC remains effective beyond Fourier-based operator architectures.

    \item \textbf{GNOT.}
    GNOT~\citep{hao2023gnot} is a transformer-based neural operator designed for irregular meshes and multiple input functions.
    It uses attention-based message passing over query points and latent representations, making it a representative attention-based baseline for both 2D airfoil and 3D car prediction.
    We include GNOT to evaluate whether GeoABC can improve boundary-sensitive prediction in models that already have global interaction capacity.

    \item \textbf{GINO.}
    GINO~\citep{li2023gino} combines local graph/kernel message passing with global Fourier operator layers for large-scale 3D PDE prediction on complex geometries.
    We use GINO on the 3D ShapeNetCar benchmark as a hybrid graph--Fourier baseline, where both surrounding velocity and surface pressure are predicted.

    \item \textbf{Transolver.}
    Transolver~\citep{wu2024transolver} learns physics-aware latent tokens from discretized geometry and applies attention over these learned physical states.
    We evaluate Transolver on both the 2D NACA and 3D ShapeNetCar benchmarks.
    This baseline is particularly important because it is already designed for general geometries and large-scale aerodynamic simulation; improvements on top of Transolver therefore test whether GeoABC contributes beyond a strong geometry-adaptive transformer backbone.
\end{itemize}

\paragraph{Datasets and evaluation setting.}
For the 2D benchmark, we use the body-only NACA airfoil dataset of~\citep{li2023geofno}, with the standard split of $1000$ training, $200$ validation, and $200$ test samples.
The evaluated 2D backbones are Geo-FNO, CNO, GNOT, and Transolver.
For the 3D benchmark, we use the ShapeNetCar external-aerodynamics dataset~\citep{umetani2018learning}, where the model predicts the surrounding velocity components $(u,v,w)$ and pressure $p$ around car geometries.
The evaluated 3D backbones are Geo-FNO3D, GINO, GNOT, and Transolver.
For each backbone, the baseline and its GeoABC-augmented counterpart use the same dataset split, preprocessing procedure, output head, optimizer family, and training horizon.

\paragraph{Common training protocol.}
All baseline and GeoABC models are trained for $50$ epochs unless otherwise stated.
For 2D NACA, checkpoints are selected by validation near-wall rel-$L_2$.
For 3D ShapeNetCar, we follow the benchmark validation protocol and select the validation-best checkpoint according to the corresponding surface-sensitive metric.
The exact backbone-specific architecture, optimizer, schedule, batch size, and GeoABC boundary-correction scale are summarized in \cref{tab:hparams}.

\paragraph{Standalone training.}
GeoABC is trained jointly with the backbone from random initialization.
We do not warm-start from a pretrained baseline, copy baseline weights into the GeoABC run, or share optimizer state between baseline and GeoABC training.
The implementation explicitly disables baseline-to-GeoABC initialization copying.
This strict standalone protocol ensures that each baseline--GeoABC comparison measures the effect of the inserted anisotropic correction module rather than a paired-initialization artifact.

\paragraph{Loss weights and anchor supervision.}
The training loss follows the form defined in \cref{app:loss}.
For 2D NACA, we use $\lambda_b = 2.0$ and $\lambda_a = 0.1$, with the anchor supervised by the ground-truth velocity vector $(u,v)$.
For 3D ShapeNetCar, we use $\lambda_b = 0.5$ and $\lambda_a = 0.02$, matching the scale of the velocity and surface-pressure losses; the anchor space is supervised by the local 3D velocity.
Seed counts for each result block are reported in the corresponding table captions, and results with two or fewer seeds are treated as breadth evidence rather than strong multi-seed claims.

\subsection{Training Objective Details}
\label{app:loss}
GeoABC is trained jointly with the backbone from scratch, without frozen modules or a two-stage protocol.
The overall objective consists of a global prediction loss, a boundary-weighted loss, and an anchor alignment loss:
\begin{equation}
    \mathcal{L}
    =
    \mathcal{L}_{\mathrm{global}}
    +
    \lambda_b \mathcal{L}_{\mathrm{boundary}}
    +
    \lambda_a \mathcal{L}_{\mathrm{anchor}} .
    \label{app:eq:total_loss}
\end{equation}
Specifically,
\begin{align}
    \mathcal{L}_{\mathrm{global}}
    &=
    \frac{1}{|\Omega_G|}
    \sum_{\mathbf{x}\in\Omega_G}
    \left\|
    \hat{\mathbf{u}}_G(\mathbf{x})
    -
    \mathbf{u}_G(\mathbf{x})
    \right\|_2^2,
    \label{eq:global_loss}
    \\
    \mathcal{L}_{\mathrm{boundary}}
    &=
    \frac{1}{Z_G}
    \sum_{\mathbf{x}\in\Omega_G}
    \omega_G(\mathbf{x})
    \left\|
    \hat{\mathbf{u}}_G(\mathbf{x})
    -
    \mathbf{u}_G(\mathbf{x})
    \right\|_2^2,
    \label{eq:boundary_loss}
    \\
    \mathcal{L}_{\mathrm{anchor}}
    &=
    \frac{1}{Z_G}
    \sum_{\mathbf{x}\in\Omega_G}
    \omega_G(\mathbf{x})
    \left\|
    \mathbf{a}(\mathbf{x})
    -
    \Pi\big(\mathbf{u}_G(\mathbf{x})\big)
    \right\|_2^2,
    \label{eq:anchor_loss}
    \\
    Z_G
    &=
    \sum_{\mathbf{x}\in\Omega_G}
    \omega_G(\mathbf{x}).
    \label{eq:normalization_factor}
\end{align}
Here, $\mathcal{L}_{\mathrm{global}}$ preserves the overall field accuracy, while $\mathcal{L}_{\mathrm{boundary}}$ emphasizes reliable near-wall regions through the spatial confidence $\omega_G(\mathbf{x})$.
The anchor loss $\mathcal{L}_{\mathrm{anchor}}$ aligns the anchor representation $\mathbf{a}(\mathbf{x})$ with a physical directional target $\Pi(\mathbf{u}_G(\mathbf{x}))$, such as the velocity vector in the $(u,v,p)$ setting or the pressure-gradient direction in the pressure-only setting.

\subsection{Backbone Configurations}
\label{app:baselineconfig}

\Cref{tab:hparams} summarizes the backbone configurations used in the main cross-backbone experiments.
For each backbone, we report the architecture, optimizer, learning-rate schedule, batch size, and the boundary-correction scale $s$ used by GeoABC.
All other components of the data pipeline, training protocol, and output head are kept identical to the corresponding baseline.

\begin{table}[h]
\caption{\textbf{Backbone configurations for the main experiments.} ``BC'' denotes the boundary-correction scale $s$ used by GeoABC.}
\label{tab:hparams}
\setlength{\tabcolsep}{3.2pt}
\begin{tabular}{p{0.16\linewidth}p{0.30\linewidth}p{0.32\linewidth}cc}
\toprule
Backbone & Architecture & Optimizer and schedule & Batch & BC \\
\midrule
\multicolumn{5}{l}{\textit{2D NACA airfoil}} \\
\quad Geo-FNO 
& width $32$, modes $(24,12)$, $4$ spectral layers 
& AdamW ($5\!\times\!10^{-4}$ / $10^{-4}$), StepLR ($\gamma{=}0.5$, step $25$) 
& $8$ & $0.10$ \\

\quad CNO 
& $4$ encoder / $4$ decoder stages, $4$ ResNet blocks, channel multiplier $16$ 
& AdamW ($10^{-3}$ / $10^{-4}$), StepLR ($\gamma{=}0.5$, step $25$) 
& $8$ & $0.10$ \\

\quad GNOT 
& $n_{\mathrm{embd}}{=}24$, $24$-head cross-attention, $4$ blocks 
& AdamW ($10^{-3}$ / $10^{-4}$), StepLR ($\gamma{=}0.5$, step $25$) 
& $8$ & $0.10$ \\

\quad Transolver 
& $n_{\mathrm{hid}}{=}128$, $L{=}8$, $n_h{=}8$, slice $64$ 
& AdamW ($10^{-3}$ / $10^{-5}$), OneCycleLR 
& $4$ & $0.05$ \\

\midrule
\multicolumn{5}{l}{\textit{3D ShapeNetCar}} \\
\quad Geo-FNO 
& width $48$, modes $(8,6,8)$, grid $(24,12,24)$, $4$ spectral layers 
& AdamW ($5\!\times\!10^{-4}$ / $10^{-4}$), CosineLR 
& $1$ & $0.10$ \\

\quad GINO 
& width $48$, modes $(8,6,8)$, grid $(24,12,24)$, $4$ operator layers 
& AdamW ($5\!\times\!10^{-4}$ / $10^{-4}$), CosineLR 
& $1$ & $0.10$ \\

\quad GNOT 
& width $128$, $4$ heads, $4$ attention blocks, radius $r{=}0.2$ 
& AdamW ($5\!\times\!10^{-4}$ / $10^{-4}$), backbone default 
& $1$ & $0.05$ \\

\quad Transolver 
& $n_{\mathrm{hid}}{=}256$, $L{=}8$, $n_h{=}8$, slice $32$ 
& AdamW ($10^{-3}$ / $10^{-4}$), OneCycleLR 
& $4$ & $0.05$ \\

\bottomrule
\end{tabular}
\end{table}

\subsection{Plug-In Interface}
\label{sec:appendix:plugin}

GeoABC is used as a mid-layer wrapper around an existing neural operator backbone.
The backbone is not redesigned: we only expose one spatially indexed hidden state, apply a geometry-conditioned anisotropic boundary correction with the same layout, and pass the corrected hidden state back to the remaining backbone layers.
Thus, the plug-in contract has three objects: a backbone prefix $\mathcal{F}^{\leq \ell^*}_\theta$, a backbone suffix $\mathcal{F}^{> \ell^*}_\theta$, and the GeoABC injector $\mathcal{I}_\phi$.
The only backbone-specific metadata are the injection layer $\ell^*$, the latent channel dimension $C_{\ell^*}$, the latent layout, and the geometry-cache alignment rule.
All other components---the offline geometry cache, the tangent--normal frame, the three gated correction branches, and the three-term loss---are shared.

\begin{algorithm}[h]
\caption{GeoABC plug-in training for an arbitrary backbone $\mathcal{F}_\theta$}
\label{alg:plugin}
\begin{algorithmic}[1]
\REQUIRE Backbone $\mathcal{F}_\theta$; injection layer $\ell^*$; geometry cache $G$; boundary-correction scale $s$; loss weights $\lambda_b,\lambda_a$.
\STATE Build $G=\{\phi,w,c,\hat{t},\hat{n},\kappa\}$  once from the training geometry and store it with the same node/grid indexing as the data.
\STATE Split the backbone forward pass into a prefix $\mathcal{F}^{\leq \ell^*}_\theta$ and suffix $\mathcal{F}^{> \ell^*}_\theta$; instantiate $\mathcal{I}_\phi$ with latent dimension $C_{\ell^*}$.
\STATE Initialize the baseline and GeoABC runs independently; train $\theta$ and $\phi$ jointly from scratch.
\FOR{each mini-batch $(x,G,y^*)$}
  \STATE $z_{\ell^*}\leftarrow \mathcal{F}^{\leq \ell^*}_\theta(x)$ \hfill\COMMENT{capture a spatially indexed mid-layer latent}
  \STATE $\bar{G}\leftarrow \textsc{AlignGeometry}(G,z_{\ell^*})$ \hfill\COMMENT{resample on grids; direct node lookup on point clouds}
  \STATE $(\mathcal{B}_{\ell^*},\hat{a})\leftarrow \mathcal{I}_\phi(z_{\ell^*},\bar{G})$ \hfill\COMMENT{tangent, normal, and cross-frame boundary correction}
  \STATE $\tilde{z}_{\ell^*}\leftarrow \textsc{ApplyBoundaryCorrection}(z_{\ell^*},\,s\,\mathcal{B}_{\ell^*})$
  \STATE $\hat{y}\leftarrow \mathcal{F}^{> \ell^*}_\theta(\tilde{z}_{\ell^*})$
  \STATE $\mathcal{L}\leftarrow \mathcal{L}_{\mathrm{global}}(\hat{y},y^*)+\lambda_b\mathcal{L}_{\mathrm{boundary}}(\hat{y},y^*;\bar{G})+\lambda_a\mathcal{L}_{\mathrm{anchor}}(\hat{a},a^*;\bar{G})$
  \STATE Update $(\theta,\phi)$ with the backbone's original optimizer and schedule.
\ENDFOR
\end{algorithmic}
\end{algorithm}

\noindent
\textbf{What ``plug-in'' means.}
GeoABC does not require a new solver head, a frozen pretrained backbone, or a two-stage adapter fit.
For each backbone, we keep the original input pipeline, output head, optimizer, and training schedule.
The implementation change is a forward-pass interception: run the original model until $\ell^*$, apply the GeoABC boundary correction inside the trusted boundary band, then resume the original model.
Because $\mathcal{B}_{\ell^*}$ has exactly the same layout as the intercepted latent, the suffix of the backbone is unaware that a correction module has been inserted.
When $w(x)c(x)=0$, the boundary correction is inactive and the wrapper locally reduces to the original backbone.

\subsection{Backbone-Specific Adaptation}
\label{sec:appendix:adapt}

Adapting GeoABC to different backbones only changes how the common plug-in interface is realized.
The GeoABC injector itself is shared across all experiments.
For each backbone, we intercept an intermediate representation that still preserves spatial correspondence with grid cells, mesh nodes, or query points, convert it into a pointwise latent format when necessary, apply the same geometry-conditioned boundary correction, and then return the corrected representation to the original backbone layout.

\begin{itemize}
    \item \textbf{Geo-FNO.}
    Geo-FNO exposes channel-first latent tensors from its spectral operator stack.
    We intercept the intermediate feature after \texttt{conv2} and before the following spectral block.
    The latent tensor is temporarily permuted into a pointwise representation so that each spatial location has one feature vector.
    The geometry cache is aligned to the current latent resolution by interpolation.
    Scalar geometric quantities, including the signed distance, narrow-band window, confidence score, and curvature, are interpolated directly.
    Vector quantities, including tangent and normal directions, are interpolated and then re-normalized before being passed to GeoABC.
    After the pointwise correction is applied, the corrected latent is reshaped back to the original channel-first tensor layout and forwarded through the remaining Geo-FNO layers.
    This adaptation keeps the Fourier backbone unchanged and only inserts GeoABC as a mid-layer boundary correction module.

    \item \textbf{CNO.}
    CNO uses an encoder--decoder convolutional structure with intermediate feature maps at multiple resolutions.
    We insert GeoABC after the first encoder downsampling ResNet stage, where the representation has already aggregated local physical context while still retaining a clear spatial correspondence to the computational grid.
    Since the intercepted CNO feature map has a lower resolution than the original geometry cache, all scalar geometric descriptors are downsampled to the encoder resolution.
    The tangent and normal fields are also downsampled, followed by normalization to preserve a valid local boundary frame.
    The corrected feature map is then restored to the original CNO tensor layout and passed to the remaining encoder--decoder stages.
    In this way, GeoABC corrects the boundary-related latent representation before it is further propagated through the multiscale convolutional operator.

    \item \textbf{GNOT.}
    GNOT maintains point-indexed latent representations at its attention blocks.
    We insert GeoABC after the first cross-attention block, where each mesh node or query point already has a corresponding latent vector.
    Because the latent representation and the geometry cache share the same node order, no spatial interpolation is required.
    GeoABC directly looks up the local geometric descriptor for each node, applies the tangent--normal boundary correction in the pointwise latent space, and returns the corrected tokens to the remaining attention blocks.
    This adaptation tests whether explicit anisotropic boundary correction can still improve a backbone that already uses global attention-based interactions.

    \item \textbf{GINO.}
    GINO combines local graph/kernel message passing with global Fourier-style operator layers.
    We insert GeoABC at a midpoint operator layer, where the representation has accumulated both local geometric interactions and global operator context.
    When the intercepted latent is stored as a grid-like tensor, we use the same wrapper as in Geo-FNO: the tensor is converted to a pointwise format, the geometry cache is aligned to the latent resolution, and the corrected feature is reshaped back afterward.
    When the latent is already associated with sampled query points, GeoABC is attached by direct pointwise geometry lookup.
    The corrected representation is then passed to the remaining GINO operator layers and the original output head.
    This preserves the original GINO computation while adding a localized anisotropic correction near solid boundaries.

    \item \textbf{Transolver.}
    Transolver represents the physical domain through latent features associated with query points and learned physical states.
    We insert GeoABC after transformer block $\ell=3$ of the $L=8$ stack.
    At this stage, the representation has already interacted through the physics-aware attention mechanism, but it still provides one feature vector for each query point after the block output.
    GeoABC is applied by direct geometry lookup at the corresponding query positions.
    The corrected latent vectors are then fed into the remaining transformer blocks and the original output head.
    This adaptation is important because Transolver is already a strong geometry-adaptive transformer backbone; improvements after inserting GeoABC therefore indicate that the proposed correction provides boundary-specific structure beyond generic geometry-aware attention.

    \item \textbf{Three-dimensional extension.}
    For volumetric or surface-based backbones, the same adaptation principle is used.
    The only change is that the anchor projection maps the intercepted latent to a three-dimensional anchor space rather than a two-dimensional one.
    The local boundary frame is constructed from the surface normal and two tangent directions spanning the tangent plane.
    Grid-based tensors use trilinear alignment of the geometry cache, while point-cloud or token-based tensors use direct geometry lookup by node or query-point identity.
    The rest of the GeoABC injector, including geometric gating, direction-aware branches, correction lifting, and latent write-back, remains unchanged.
\end{itemize}

\paragraph{Layer selection.}
Across all backbones, GeoABC is inserted at an intermediate layer rather than at the input or output stage.
An input-level insertion would reduce the method to ordinary boundary-feature augmentation before sufficient physical context is formed.
An output-level insertion would behave like a late post-correction with limited opportunity to interact with subsequent operator layers.
Therefore, the chosen layers balance two requirements: the representation should already encode global physical context, while still preserving a clear spatial correspondence to grid cells, mesh nodes, or query points.
For each backbone, the selected insertion location is fixed across all seeds and variants.

\subsection{Injector Internals}
\label{sec:appendix:injector}

The GeoABC injector is implemented as a lightweight pointwise correction module shared across all backbones.
Its input is the intercepted mid-layer latent representation together with the offline geometry cache.
Backbone wrappers only convert the latent layout when necessary; the internal injector architecture remains the same.

The injector consists of four components.
First, an anchor projection maps the backbone latent into a low-dimensional physical anchor space.
The anchor dimension is two for 2d tasks and three for 3d tasks, matching the dimensionality of the local velocity used for anchor supervision.
Second, an auxiliary projection extracts a compact context vector from the same latent representation.
This context is shared by all correction branches and provides local physical information from the backbone feature.
Third, a geometry modulator takes scalar geometric descriptors from the offline cache, including signed distance, boundary window, confidence, and curvature, and produces directional gates for the correction branches.
Finally, three pointwise branches generate tangent, normal, and tangent--normal coupling corrections in the local boundary frame.

The correction is activated only inside the trusted near-wall band.
In practice, the boundary window and confidence score jointly control the spatial support of the correction, while the boundary-correction scale controls its overall magnitude.
Therefore, points far from the solid boundary keep the original backbone representation unchanged, and points inside the near-wall band receive geometry-conditioned anisotropic correction.

For 2d airfoil experiments, the local frame is defined by the tangent and normal directions of the closest boundary point.
For 3d car experiments, the frame is defined by the surface normal and two tangent directions spanning the local tangent plane.
Apart from this change in frame dimension, the same injector design is used across 2d and 3d settings.

\begin{table}[t]
\centering
\small
\caption{\textbf{Parameter overhead of the GeoABC injector.}
GeoABC introduces only a small number of additional parameters relative to the original backbone.}
\label{tab:injector_params}
\setlength{\tabcolsep}{8pt}
\begin{tabular}{lcccc}
\toprule
Backbone & Baseline params & GeoABC params & Added params & Relative increase \\
\midrule
Geo-FNO
& $2{,}368{,}291$
& $2{,}397{,}179$
& $28{,}888$
& $1.22\%$ \\
CNO
& $2{,}007{,}211$
& $2{,}034{,}723$
& $27{,}512$
& $1.37\%$ \\
\bottomrule
\end{tabular}
\end{table}

Overall, the injector adds less than $2\%$ parameters for the representative backbones reported in \cref{tab:injector_params}.
This confirms that the improvement of GeoABC does not come from substantially increasing model capacity, but from introducing geometry-conditioned anisotropic correction into the intermediate representation.

\subsection{Seed Protocol and Reporting}
\label{sec:appendix:seed}

All experiments follow a strict standalone protocol: each baseline and each GeoABC run is trained from an independent random initialization, with no shared optimizer state and no warm-start from a pretrained backbone.
For the experiments where variance is reported, we use three random seeds, $\{0,1,2\}$, and report the results as mean $\pm$ standard deviation.

\subsection{Code Availability and Computing Resources}
The anonymized source code is available at
\url{https://anonymous.4open.science/r/GeoABC_code/}.
The repository provides the implementation of GeoABC and instructions for reproducing the main experimental results.

All experiments were conducted on a workstation equipped with four NVIDIA GeForce RTX 2080 Ti GPUs. The reported training runs were executed on this local GPU machine using the training configurations described in Appendix~\ref{app:baselineconfig}.

\section{Metric Computation}
\label{sec:appendix:metrics}

All metrics are evaluated on the held-out test set with the validation-best checkpoint.
Let $\hat{y}$ denote the model prediction and $y^*$ the ground truth over $N$ mesh points.

\paragraph{Global and near-wall rel-$L_2$.}
\begin{equation*}
    \mathrm{rel}\text{-}L_2 \;=\; \frac{1}{N_\mathrm{test}} \sum_{i=1}^{N_\mathrm{test}} \frac{\|\hat{y}^{(i)} - y^{*(i)}\|_2}{\|y^{*(i)}\|_2}.
\end{equation*}
For multi-channel outputs, the per-channel ratio is averaged.
\emph{Global} rel-$L_2$ uses all $N$ mesh points; \emph{near-wall} rel-$L_2$ uses only the points with $\mathrm{window}(x) > 0$, corresponding to the top ${\sim}5.4\%$ $\phi$-percentile band around the body surface.

\paragraph{$C_p$ MAE and wall-pressure MAE.}
The surface pressure coefficient is $C_p = (p - p_\infty) / (\frac{1}{2}\rho_\infty u_\infty^2)$.
Both metrics are computed on the $121$-point body arc:
\begin{equation*}
    C_p\text{-MAE} = \frac{1}{|\partial\Omega|}\sum_{j \in \partial\Omega} |\hat{C}_p^{(j)} - C_p^{*(j)}|, \qquad
    p\text{-MAE} = \frac{1}{|\partial\Omega|}\sum_{j \in \partial\Omega} |\hat{p}^{(j)} - p^{*(j)}|.
\end{equation*}

\paragraph{3D surface-velocity rel-$L_2$.}
We keep the benchmark-style name ``surface-velocity'' for consistency with the
3D result tables, but the metric is evaluated on near-surface velocity points
rather than directly on the surface mesh. In TransolverCar, velocity labels are
defined on off-surface CFD volume points, while pressure labels are defined on
wetted-surface points. Let $\mathcal{V}$ denote the non-surface velocity points
and let $d_i$ be the denormalized SDF value of point $i$. We use the
SDF-defined near-surface mask
\[
\mathcal{M}_{\mathrm{surf\mbox{-}vel}}
=
\left\{
i\in\mathcal{V}:
\exp\!\left[-\left(d_i/0.2\right)^2\right] > 0.1
\right\}.
\]
The reported 3D surface-velocity rel-$L_2$ is then
\[
\frac{
\left(
\sum_{i\in\mathcal{M}_{\mathrm{surf\mbox{-}vel}}}
\|\hat{\mathbf{u}}_i-\mathbf{u}_i\|_2^2
\right)^{1/2}
}{
\left(
\sum_{i\in\mathcal{M}_{\mathrm{surf\mbox{-}vel}}}
\|\mathbf{u}_i\|_2^2
\right)^{1/2}
}.
\]
Thus, throughout the paper, ``surface-velocity'' refers to this
surface-proximal velocity-band metric, whereas surface-pressure metrics are
computed directly on wetted-surface points.

\section{Limitations}
This work focuses on near-wall prediction in steady-state aerodynamic surrogate modeling.
GeoABC shows consistent improvements on both 2D airfoil and 3D car geometries, demonstrating the effectiveness of using local tangential--normal structures for boundary correction.
At the same time, our current experiments are centered on steady external-flow settings.
An important future direction is to extend this idea to unsteady flows, broader operating-condition ranges, and multiphysics coupled systems.

In addition, GeoABC constructs local geometric descriptors from an offline geometry cache, including the tangent--normal frame, curvature, near-wall window, and confidence score.
This design allows the method to explicitly leverage boundary geometry that is naturally available in CFD tasks.
Future work may further investigate more automated and robust geometry preprocessing pipelines, making the framework easier to apply to larger-scale and more complex industrial geometries.

Finally, this paper introduces geometric structural priors from a data-driven neural operator perspective to improve boundary-sensitive prediction.
Future work can further explore the combination of GeoABC with explicit physical constraints, such as PDE residuals, boundary-condition regularization, or conservation-preserving structures.
Such integration may further improve the reliability of neural aerodynamic surrogates under different operating conditions, geometry distributions, and high-fidelity engineering simulation settings.

\section{Broader Impacts}
\label{app:broader_impacts}

This work studies a geometry-conditioned anisotropic boundary correction module for neural operator surrogates in steady-state aerodynamic simulation. Our primary motivation is scientific and engineering: improving the near-wall accuracy of learned CFD surrogates so that they can more reliably support shape design, rapid screening, and engineering iteration. We discuss below both potential positive societal impacts and possible negative considerations of this line of research.

\paragraph{Potential positive impacts.}
Classical CFD solvers, although accurate, are computationally expensive and often become a bottleneck in design optimization, parameter sweeps, and interactive engineering iteration. Neural operator surrogates can reduce this cost by orders of magnitude, and improving their boundary region accuracy is critical for engineering metrics such as the surface pressure coefficient $C_p$ and the drag coefficient $C_D$. By making such surrogates more reliable on quantities that directly affect lift, drag, and aerodynamic loading, GeoABC can contribute to (i) faster and more energy-efficient aerodynamic design cycles for vehicles, aircraft, and other engineering shapes; (ii) reduced computational and energy footprint of large-scale design exploration that would otherwise rely on extensive high-fidelity CFD runs; and (iii) easier access to aerodynamic analysis for research groups and small engineering teams that lack high-end CFD infrastructure. In the long term, more accurate surrogate models can also support broader sustainability goals, for example, by enabling more thorough exploration of low-drag vehicle shapes or more efficient wing geometries.

\paragraph{Potential negative impacts and risks.}
The methodology proposed in this paper is foundational research on neural PDE surrogates and does not generate content, identify individuals, or make decisions about people. We do not foresee a direct path to malicious applications such as disinformation, surveillance, or biased decision-making. Nevertheless, two indirect risks deserve mention. First, like any surrogate model, GeoABC may produce predictions that are accurate on the training
distribution but less reliable on out-of-distribution geometries or operating conditions. If such surrogates are deployed in safety-critical engineering pipelines without adequate validation against high-fidelity solvers or experiments, prediction errors could propagate into design decisions. Second, the same acceleration that benefits civilian aerodynamic design could in principle be used in dual-use engineering contexts. We note, however, that the techniques studied here are general-purpose CFD surrogate improvements and do not provide capabilities specific to any particular sensitive application.

\paragraph{Mitigation and responsible use.}
We encourage practitioners who build on this work to (i) treat neural operator surrogates as accelerators rather than replacements for high-fidelity CFD, especially for safety- or certification-critical decisions; (ii) systematically report out-of-distribution behavior and near-wall reliability when deploying such models in engineering pipelines; and (iii) combine data-driven surrogates with physics-based checks, such as PDE residual evaluation or conservation diagnostics, to flag potentially unreliable predictions. The methodology, datasets, and evaluation protocols used in this paper are all standard and publicly available, which allows independent verification of the reported results.

Overall, we believe that the broader impact of this work is primarily positive: it advances the reliability of neural CFD surrogates in a principled, geometry-aware manner, while the residual risks are common to the field of learned PDE solvers and can be mitigated through careful validation and responsible deployment.

\end{document}